\newcommand\R{\mathbb{R}}
\newcommand\mS{\mathcal{S}}
\newcommand\mC{\mathcal{C}}
\newcommand\E[1]{\mathbb{E}[~#1~]}
\newcommand\Ec[2]{\mathbb{E}[~#1~|#2~]}
\newcommand\bb[1]{\gamma(#1)}
\renewcommand\u{c}
\newcommand\uu{u}
\newtheorem{definition}{Definition}
\newtheorem{lemma}{Lemma}
\newtheorem{theorem}{Theorem}
\newtheorem{corollary}{Corollary}
\newtheorem{proposition}{Proposition}
\newenvironment{proof}{{\bf Proof}:}{\hfill $\Box$ }
\newcommand\CONJECTURAL[1]{#1}
\newcommand\OPTIONAL[1]{{#1}}
\newcommand\OPTIMISTE[1]{#1}
\newcommand\equations[1]{
\begin{array}{lll}
#1
\end{array}
}
\newcommand\myaddress{ 
{\sc Ecole Polytechnique,} \\ {\sc Laboratoire LIX}\\ 91128 Palaiseau Cedex, \\FRANCE \\
{\small Olivier.Bournez@lix.polytechnique.fr}}
\newcommand\johanneaddress{ 
{\sc Centre National de la Recherche Scientifique,} \\{\sc Laboratoire PRiSM}\\ Universit\'e de Versailles \\
45, avenue des Etats-Unis, 78000 Versailles,\\ FRANCE \\
{\small Johanne.Cohen@prism.uvsq.fr}
}
\begin{document}

\title{Learning Equilibria in Games by Stochastic Distributed Algorithms.}

\author{
 Olivier Bournez \\[0.2cm]
\myaddress
\and
Johanne Cohen \\[0.2cm]
\johanneaddress
}

\setcounter{page}{0}
\maketitle

\begin{abstract} We consider a class of fully stochastic and fully distributed algorithms, that we prove to learn equilibria in games.
  Indeed, we consider a family of stochastic distributed dynamics that we prove to converge weakly (in the sense of weak convergence for probabilistic processes) towards their mean-field limit, i.e an ordinary differential equation (ODE) in the general case.  We focus then on a class of stochastic dynamics where this ODE turns out to be related to multipopulation replicator dynamics. 
  Using facts known about convergence of this ODE, we discuss the convergence of the initial stochastic dynamics: For general games, there might be non-convergence, but when convergence of the ODE holds, considered stochastic algorithms converge towards Nash equilibria. For games admitting Lyapunov functions, that we call Lyapunov games, the stochastic dynamics converge. We prove that any ordinal potential game, and hence any potential game is a Lyapunov game, with a multiaffine Lyapunov function. For Lyapunov games with a multiaffine Lyapunov function, we prove that this Lyapunov function is a super-martingale over the stochastic dynamics. This leads a way to provide bounds on their time of convergence by martingale arguments. This applies in particular for many classes of games that have been considered in literature, including several load balancing game scenarios and congestion games.

\end{abstract}

\newpage

\section{Introduction}

Consider a scenario where agents learn from their experiments, by small adjustments. This might  be  for example about choosing their telephone companies, or about their portfolio investments. 
We are interested in understanding when the whole market can converge towards rational situations, i.e. Nash equilibria in the sense of game theory. This is natural to expect dynamics of adjustments to be stochastic, and fully distributed, since we expect agents to adapt their strategies based on their local knowledge of the market, and since agents are often involved in games where a global, and hence local, deterministic description of the whole global market is not possible.  

Several such dynamics of adjustments have been considered recently in the algorithmic game theory literature. Up to our knowledge, this has been done mainly for deterministic dynamics or best-response based dynamics: Computing a best response requires a global description of the market. Stochastic variations, avoiding a global description, have been considered. However, considered dynamics are somehow rather ad-hoc, in order to get efficient convergence time bounds, and still mainly best-response based. We want to consider here more general dynamics, and discuss when one may expect convergence. This could lead to consider any dynamics  which is monotone with respect to the utility of players, in relation with evolutionary game theory literature \cite{Evolutionary1}. We propose to restrict here to dynamics that lead to dynamics related to (possibly perturbed) replicator dynamics.

Somehow, as algorithmic game theory can be seen as an algorithmic version of classical game theory, our long term aim is to better understand algorithmic evolutionary game theory. Somehow, we could also say, that as best-response dynamics can be seen as strategies that visit corners of the simplex of (mixed) strategies, we are interested in a long term objective in learning methods that could be seen as interior point methods to find equilibria. 


\vspace{0.1cm}
\textbf{Basic game theory framework.}  Let $[n]=\{1,\dots,n\}$ be the set of players. Every player $i$ has a set $\mS_i$ of \emph{pure strategies}. Let $m_i$ be the cardinal of $\mS_i$.  A \emph{mixed strategy} $q_i=(q_{i,1},q_{i,2},\dots,q_{i,m_1})$ corresponds to a probability distribution over pure strategies: pure strategy $\ell$ is chosen with probability $q_{i,\ell} \in [0,1]$, with $\sum_{\ell=1}^{m_i} q_{i,\ell}=1$. Let $K_i$ be the simplex of mixed strategies for player $i$.
Any pure strategy $\ell$ can be considered as mixed strategy $e_\ell$, where vector $e_\ell$ denotes the unit probability vector  with $\ell^{th}$ component unity, hence as a corner of $K_i$. 

Let $K=\prod_{i=1}^n K_i$ be the space of all mixed strategies. A \emph{strategy profile} $Q= (q_1, ..., q_n) \in K$ specifies the (mixed or pure) strategies of all players: $q_i$ corresponds to the mixed strategy played by player $i$.
Following classical convention, we write often write abusively $Q=(q_i,Q_{-i})$, where $Q_{-i}$ denotes the vector of the strategies played by all other players.

We allow games whose payoffs may be random: we only assume that whenever the strategy profile $Q \in K$ is known, each player $i$ gets a random \emph{cost} of expected value $c_i(Q)$. In particular, the expected cost for player $i$  for playing pure strategy $e_\ell$  is denoted by $c_i(e_\ell,Q_{-i})$.


\vspace{0.1cm}
\textbf{Some classes of games.} Several classes of 
games where players' costs are based on
the shared usage of a common set of resources $[m]=\{1,2,\dots,m\}$ where each resource $1 \le r \le m$ has an associated nondecreasing cost
function denoted by $C_r : [n] \to \R$, have been considered in algorithmic game theory literature.

  In \emph{load balancing games} \cite{KP99}, resources are called machines, and players compete for elements (i.e. singleton subsets) of $[m]$. Hence, the pure strategy space $\mS_i$ of player $i$
having a weight $w_i$ corresponds to $[m]$ or a subset of $[m]$, and a pure strategy $q_i \in \mS_i$ for player $i$ is some element $r \in [m]$.  
The cost for player (task) $i$  under profile of pure strategies (assignment) $Q=(q_1,\dots,
q_n)$  corresponds to $c_i(Q) = C_{q_i}(\lambda_{q_i}(Q))$, where $\lambda_r(Q)$ is the load of machine $r$:  $\lambda_r(Q) =
\sum_{j: q_j=r} w_j$, that is to say 
the sum of the weights of the tasks running on it.

%

In \emph{congestion games} \cite{rosenthal73}, resources are called edges, and players compete for subsets of $[m]$. \label{def:congestion}
Hence, the pure strategy space $\mS_i$ of player $i$ is a subset of $2^{[m]}$
and a pure strategy $q_i \in Q$ for player $i$ is a subset of $[m]$.  The cost of player $i$ under 
profile of pure strategies $Q$ corresponds to $c_i (Q) = \sum_{r \in q_i} C_r (\lambda_r (Q))$ where
$\lambda_r(Q)$ is the number of 
$q_j$ with $r \in q_j$. In \emph{weighted congestion games}, weights $(w_i)_i$ are associated to players, and one takes instead $\lambda_r(Q)= \sum_{j:r\in q_j} w_j$.

In \emph{task allocation games} \cite{ChristodoulouKN04},  as in load balancing games, resources are called machines, and players compete for elements (i.e. singleton subsets) of $[m]$. 
Each resource (machine) $r$ is  assumed to have a function  $C_r$ that takes as input a set of tasks $\lambda \subset [n]$ assigned to it, and outputs a cost $C_{r,j}$ for each participating player $j$. The cost of player $i$ under 
profile of pure strategies $Q$ is then given by $c_i (Q) = C_{q_i,i}(\{j| q_j=q_i\})$. 
Functions $C_r$ can be considered as speed and scheduling policies, and associated costs as corresponding completion time for player (task) $i$.  For example, SPT and LPT are policies that schedule the jobs
without preemption respectively in order of increasing or decreasing weights (processing times) \cite{ChristodoulouKN04}.

Clearly, load balancing games are particular task allocation games, and load balancing games are particular weighted congestion games. A load balancing game whose weights are unitary is a particular congestion game. 







\vspace{0.1cm}
\textbf{Ordinal and potential games.} All these classes of games can be related to ordinal and potential games introduced by \cite{MondererShapley1996}: 
%
%
A game is an \emph{ordinal potential game} if there exists some function $\phi$ from
  \emph{pure} strategies to $\R$ such that for all pure strategies
  $Q_{-i}$, $q_i$, and $q'_i$, one has 
$\u_i(q_i,Q_{-i})-\u_i(q'_i,Q_{-i}) >0 \mbox{ iff } \phi(q_i,Q_{-i}) - \phi(q'_i,Q_{-i})>0$.
It is an \emph{an (exact) potential game} \label{def:potential} if 
for all pure strategies $Q_{-i}$, $q_i$, and $q'_i$, one has
$\u_i(q_i,Q_{-i})-\u_i(q'_i,Q_{-i}) = \phi(q_i,Q_{-i}) - \phi(q'_i,Q_{-i}).$

\section{Stochastic Learning Algorithms}

\textbf{Generic Stochastic Learning Algorithm.}
We want basically to consider learning algorithms of the following form, over the most possible general games, where
$b$ is a parameter, intended to be positive but close to $0$. 
\ \\

\def\td{\quad \quad}
\def\tt{\quad \quad \quad}

\hspace{-0.7cm}
\frame{
\begin{minipage}{\textwidth}
$\bullet$ Initially, $q_i(0) \in K_i$ can be any vector of
    probability, for all $i$. \\
$\bullet$  At each round $t$, 

\td $\bullet$ Any player $i$: selects a strategy $s_i(t) \in \mS_i$ according to
  distribution $q_i(t)$: player $i$ selects strategy $\ell \in \mS_i$
  with probability
  $q_{i,\ell}(t)$.   This leads to a (random) cost $r_i(t)$ for player $i$. 

\td $\bullet$  Select some player $i(t)$ at random: player $i$ is selected with probability
      $p_i$, with $\sum_{i=1}^n p_i=1$.

\tt This player $i=i(t)$ updates $q_i(t)$ as follows: 
 $      q_i(t+1) = q_i(t) + b F_i^b(r_i(t),s_i(t),q_i(t));
  $

\tt Any other player keeps $q_i(t)$ unchanged: $q_i(t+1) =q_i(t)$.
  \end{minipage}
  }

\vspace{0.1cm}

In a first step, consider functions $F_i^b(r_i(t),s_i(t),q_i(t)))$  as generic as possible, maintaining that the $q_i(t)$ always stay validity probability vectors: that is to say, $q_{i,\ell}(t) \in [0,1]$ and $\sum_{\ell} q_{i,\ell}(t)=1$ is preserved. Functions $F_i^b(r_i(t),s_i(t),q_i(t))$ can be random (formally a random variable). We only assume that its expectation $\Ec{F_i^b(r_i(t),s_i(t),q_i(t))}{Q(t)}$ is always defined.

This corresponds indeed to fully distributed algorithms\footnote{We of course understand that for some games (like congestion games), the size of the involved probability vectors might be non-polynomial. However, by restricting to function $F_i^b(r_i(t),s_i(t),q_i(t)))$, or close dynamics, which guarantee a support of polynomial size for $q_i(t)$, can solve the problem: restrict to function which are equal to $-bq_{i,\ell}$ for components $\ell$ outsides a polynomial (or fixed) sized support, for example.  If this is too problematic to our reader, please consider that we restrict to games where the $m_i$ stay polynomial, as for load balancing games and task allocation games.}. Decisions made by players are completely decentralized: At each time step, player $i$ only needs $r_i$ and $q_i$, that is to say respectively her cost and her current mixed strategy, to update his own strategy $q_i$. 

Let $Q(t) = (q_1(t), ..., q_n(t)) \in K$ denote the state of all players
at instant $t$. Our interest is in the asymptotic behavior of $Q(t)$, and its possible convergence to Nash equilibria. 
Assume that $G_i(Q)=\lim_{b \to 0} \Ec{F_i^b(r_i(t),s_i(t),q_i(t))}{Q}$ exists and is some continuous function $G_i$ of $Q$.

\vspace{0.2cm}
\textbf{Results.}
In the general case (Theorem \ref{propindiens}), any stochastic algorithm in the considered class converges
weakly (in the sense of weak convergence for probabilistic processes) towards solutions of initial value problem (ordinary differential equation (ODE)) 
$\frac{dq_i}{dt}  =p_i G_i(Q),$ 
 given $Q(0)$, i.e. to its mean-field limit approximation.

This can be seen informally as follows: Assume we replace $\Ec{\Delta
  q_i(t)}{Q(t)}$ by $\Delta q_i(t)$ in $\Ec{\Delta q_i(t)}{Q(t)} = b p_i \tilde{F}_i^b(Q(t))$, in the discussion that follows the description of the algorithm, where $\tilde{F}_i^b(Q(t))=\Ec{F_i^b(r_i(t),s_i(t),q_i(t))}{Q(t)}$.

Through the change of variable $t \leftarrow t b$, this would become 
$q _{i}(t+b)-q_{i}(t) = bp_i \tilde{F}_i^b(Q).$
Approximating $q_i(t+b)-q_i(t)$ by $b \frac{dq_i}{dt}(t)$ for small $b$, we may expect
the system to behave like ordinary differential equation (ODE)
\begin{equation} \label{eq:informal}
\frac{dq_i}{dt}  =p_i G_i(Q),
\end{equation}
when $b$ is close to $0$. 

%
A \emph{replicator-like dynamics} $F_i^b$ is a dynamic where \label{def:replicatorlike}
$$  F_i^b(r_i(t),s_i(t),q_i(t))=  \gamma(r_i(t)) 
(e_{s_i(t)}-q_i(t)) + \mathcal{O}(b), $$
or where this holds for its expectation,
where $\gamma:\R \to [0,1]$ is some \emph{decreasing}\footnote{ If we assume all costs to be positive, by linearity of expectation then all costs must be bounded by some constant $M$, and we can take for example $\gamma(x)=\frac{M- x}{M}$. 
} function with value in $[0,1]$.
 Recall that $e_{s_i(t)}$ is the unit vector of dimension $m_i$ with component number $s_i(t)$ unity.

Notice that we allow perturbed dynamics: $\mathcal{O}(b)$ denotes some perturbation that stay of order of parameter $b$.

We can also allow randomly perturbed dynamics:  a \emph{perturbed replicator-like dynamic} is of the form
$$ \label{per:urgence}
 F_i^b(r_i(t),s_i(t),q_i(t))= \mathcal{O}(b)+ 
\left\{
\begin{array}{ll}
\gamma(r_i(t))
(e_{s_i(t)}-q_i(t)) & \mbox{ with probability $\alpha$} \\
b (e_{s_j}-q_i(t)) & \mbox{ with probability } 1-\alpha, \\
& \mbox{  where } j\in \{1,\dots,m_i\} \\
& \mbox{ is chosen uniformly}, 
\end{array}
\right. 
$$ 
where $0<\alpha<1$ is some constant.


We claim that such dynamics have a mean-field approximation which is isomorphic to a multipopulation replicator dynamics.

We claim (Theorem \ref{th:deux}), that for general games, \emph{if}  there is convergence of the mean-field approximation, then stable limit points will correspond to Nash equilibria of the game. Notice, that there is no reason that convergence of mean-field approximation holds for generic games, but if it holds, then its stable limit points will be Nash equilibria.


We claim (Theorem \ref{th:ordinal}) that ordinal games (and hence (exact) potential games) are \emph{Lyapunov} games: their mean-field limit approximation admits some  Lyapunov function. Furthermore, this Lyapunov function, that can be taken as the expectation of the potential and is of a special type, that we call \emph{multiaffine}.

We show that for Lyapunov games with multiaffine Lyapunov function (hence this includes ordinal and (exact) potential games such as load balancing, task allocation and congestion games), the Lyapunov function is a super-martingale over stochastic dynamics.

We deduce results on the convergence of stochastic algorithms for this class. 
We claim (Theorem \ref{th:timeone}) that for generic Lyapunov games with multiaffine Lyapunov function, the convergence towards Nash equilibria happens in expected time of order $ \frac{F(Q(0))}{\epsilon},$ taking $b$ of order $\epsilon$.

\vspace{0.1cm}
\textbf{Related work.}
This is clear that an (exact) potential game is an ordinal potential game.  Congestion games, and hence load balancing games are known to be particular (exact) potential games \cite{rosenthal73}. Actually, it is known that a game is an (exact) potential game iff its is isomorphic to a congestion game \cite{MondererShapley1996}.  It has been proved in \cite{ChristodoulouKN04} that task allocation games are ordinal potential games, for SPT and LPT policies: it is proved that one can build some function $\phi$, which takes values of the form $(l_1,\cdots,l_n)$, that is lexicographically decreasing iff a player is doing a best response move.  As the $l_i$ (which corresponds to loads) are bounded by some constant $K$, function $\phi=\sum_{i} l_i K^{n-i}$ is decreasing iff a player is doing a best response move.

In other words, task allocation games under SPT and LPT policies  are indeed ordinal potential games, under the terminology of \cite{MondererShapley1996}.

An ordinal potential game always have a pure Nash equilibrium: since 
ordinal potential function, that can take only a finite number of values, is strictly decreasing in any sequence of pure strategies strict best response moves, such a sequence must be finite and must lead to a Nash equilibrium \cite{rosenthal73}. This proof of existence of pure Nash equilibria can be turned into a dynamic: players play in turn, and move to resources with a lower cost. 

For load-balancing games, following this idea, bounds on the convergence time of best-response dynamics have been investigated in \cite{
  Even-DarKM07}. 
%
Since players play in turns, this is often called the
\textit{Elementary Stepwise System}. Other results of convergence in
this model, have been investigated
in~\cite{GoldbergPODC2004,libman2001ars,orda1993crm}, but all require some global knowledge of the system in order to determine what next move to choose. 

A Stochastic version of best-response dynamics has been investigated in \cite{SODA06,BerenbrinkS07}. It is proved to terminate in expected $O(\log \log n + m^4)$ rounds for uniform tasks, and uniform machines. This has been extended to weighted tasks and uniform machines in \cite{BerenbrinkS07}. The expected time of convergence to an $\epsilon$-Nash equilibrium is in $\mathcal{O}(nmW^3 \epsilon^{œôø²2} )$ where $W$ denotes the maximum weight of any task.

For congestion games, the problem of finding pure Nash equilibria in congestion games is PLS-complete \cite{PLS}.  Efficient convergence of particular best-response dynamics to approximate Nash equilibria in symmetric congestion games have been  investigated  in \cite{ChienS07}, in the particular case where each resource cost function satisfies a \emph{bounded jump assumption}. In this context, the convergence to $\epsilon$-Nash equilibria occurs within a number of steps that is polynomial in the number of players.  This has been extended  to different classes of asymmetric congestion games in \cite{AwerbuchAEMS08}.

All previous discussions are about best-response dynamics. A stochastic dynamic, not elementary stepwise like ours, but close to those considered in this paper, has been partially investigated in \cite{DecentralizedLearning94} for general games and for potential games:  It is proved to be weakly convergent to solutions of a multipopulation replicator equation. Some of our arguments follow theirs, but notice that their convergence result (theorem 3.1) is incorrect: convergence may happen towards non-Nash (unstable) stationary points. Furthermore, this is not clear that any super-martingale argument holds for such dynamics, as our proof relies on the fact that the dynamics is elementary stepwise.



Replicator equations have been deeply studied in evolutionary game theory \cite{Evolutionary1,LivreWeibull}.  Evolutionary game theory has been applied to routing problems in the Wardrop traffic model in \cite{fischer2004esr,fischer2006fcw}. Potential games have been generalized to continuous player sets in \cite{sandholm2001pgc}. They have be shown to lead to multipopulation replicator equations, and since our dynamics are not about continuous player sets, but lead to similar dynamics, we borrow several 
constructions from \cite{sandholm2001pgc}.  No time convergence discussion is done in \cite{sandholm2001pgc}.

A replicator equation for routing games has been considered in \cite{Altman2007}, where a Lyapunov function is established. The dynamics considered in \cite{Altman2007} considers marginal costs. In \cite{TouatiC09,Taouti09}, the replicator dynamics for particular allocation games are proved to converge to a pure Nash equilibrium by modifying game costs in order to obtain Lyapunov functions.

\section{Mean-Field Approximation For Generic Stochastic Algorithms}

Recall that we are interested in discussing the evolution of $Q(t)$, where $Q(t) = (q_1(t), ..., q_n(t)) \in K$ denotes the state of the player team at instant $t$ in the stochastic algorithm.

Clearly, $Q(t)$ is an homogeneous Markov chain.   
Define $\Delta Q(t)$ as $\Delta Q(t) =Q(t+1)-Q(t),$
and $\Delta q_{i}(t)$ as $q_i(t+1)-q_i(t).$ We can write 
\begin{equation} \label{eq:deuxd}
\Ec{\Delta q_i(t)}{Q(t)} = b p_i \Ec{F_i^b(r_i(t),s_i(t),q_i(t))}{Q(t)}, 
\end{equation}
with $G_i(Q)=\lim_{b \to 0} \Ec{F_i^b(r_i(t),s_i(t),q_i(t))}{Q(t)}$ assumed to be continuous under our hypotheses.

Convergence of the stochastic algorithms towards ordinary differential equations defining their mean-field limit approximation
can be formalized as follows: Consider the piecewise-linear interpolation $Q^b(.)$ of $Q(t)$ defined by
$Q^b(t) = Q({\lfloor t/b\rfloor}) + (t/b - \lfloor t/b \rfloor) (Q({\lfloor t/b+1 \rfloor}) -  Q({\lfloor t/b\rfloor})). $
Function $Q^b(.)$ belongs to the space of all functions from $\R$ into
$K$ which are right continuous and have left hand
limits (\emph{cad-lag functions}). Now consider the sequence \{$Q^b(.) : b > 0$\}. We are
interested in the limit $Q(.)$ of this sequence when $b \to 0$.
Recall that a family of random variable $(Y_t)_{t \in \R}$ weakly
converges
to a random variable $Y$, if $E[h(X_t)]$ converges to
$E[h(Y)]$ for each bounded and continuous function $h$
. 

\begin{theorem}
  \label{propindiens}
The sequence of interpolated processes \{$Q^b(.)$\} converges weakly, when $b \to 0$, to $Q(.)$, which is the (unique deterministic) solution of initial value problem 
\begin{equation} \label{eqgene}
\frac{dq_{i}}{dt} =   p_i G_i(Q), \mbox{ } i=1,\cdots,n,
\end{equation}
with $Q(0) = Q^b(0)$.
\end{theorem}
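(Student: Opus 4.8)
The plan is to establish this as an instance of the classical stochastic-approximation / weak-convergence theory (the ODE method, à la Kushner–Yin, Benaïm, or the Ljung-style results used in \cite{DecentralizedLearning94}), exploiting the fact that $Q(t)$ is a homogeneous Markov chain whose conditional one-step increment has the form of equation~(\ref{eq:deuxd}). The key structural observation is that the update is a constant-step-size stochastic approximation scheme with step size $b$: writing $\Delta q_i(t) = b p_i \tilde F_i^b(Q(t)) + b\, \xi_i(t)$, where $\tilde F_i^b(Q(t)) = \Ec{F_i^b(r_i(t),s_i(t),q_i(t))}{Q(t)}$ is the drift and $b\,\xi_i(t)$ is a martingale-difference noise term with $\Ec{\xi_i(t)}{Q(t)}=0$. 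Since $\tilde F_i^b \to G_i$ continuously as $b\to 0$, the drift of the rescaled process converges to the vector field $p_i G_i(Q)$.

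**First I would** verify the hypotheses required by the general weak-convergence theorem. There are essentially four: (i) the mean drift $p_i \tilde F_i^b(Q)$ converges uniformly (or uniformly on compacts) to the continuous limit $p_i G_i(Q)$ as $b\to 0$, which holds by the assumed continuity of $G_i$ together with compactness of $K$; (ii) the noise increments $\xi_i(t)$ are bounded (here automatic, since each $q_i(t)$ stays in the compact simplex $K_i$, so increments are uniformly bounded, giving tightness); (iii) a conditional-variance / martingale-difference condition controlling the fluctuations, again immediate from boundedness and the fact that the noise has zero conditional mean; and (iv) uniqueness of solutions of the limiting ODE~(\ref{eqgene}) given the initial condition $Q(0)$. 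Then I would invoke the standard result (e.g.\ Kushner–Yin, constant-step-size case) concluding that the interpolated processes $Q^b(\cdot)$ are tight and that every weak limit is, almost surely, a solution of~(\ref{eqgene}).

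**Next I would** argue uniqueness of the limit trajectory. Because $K=\prod_i K_i$ is compact and convex and the interpolated paths $Q^b$ are uniformly Lipschitz (slope bounded by $\sup_i p_i \|\tilde F_i^b\|$, which is bounded uniformly in $b$ by boundedness of increments divided by $b$), the family $\{Q^b(\cdot)\}$ is tight by Arzelà–Ascoli / the Prohorov criterion for $C([0,T];K)$. Identifying the limit then reduces to showing that any weak limit $Q(\cdot)$ satisfies the integral form $q_i(t) = q_i(0) + \int_0^t p_i G_i(Q(s))\,ds$; this follows by passing to the limit in the telescoped increment sum, using the drift convergence from step (i) to handle the $G_i$ term and a maximal-inequality / law-of-large-numbers argument to kill the accumulated martingale noise (whose quadratic variation is $O(b)$ per unit rescaled time, hence vanishes). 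Since $G_i$ is continuous on the compact $K$, solutions of~(\ref{eqgene}) are unique once we note the vector field extends to a Lipschitz (or at least uniqueness-guaranteeing) field — here I would either assume local Lipschitzness of $G_i$ or appeal to the specific replicator-like form, which is smooth. Uniqueness of the ODE solution then forces all weak subsequential limits to coincide, upgrading tightness to full weak convergence to the single deterministic trajectory.

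**The hard part** is the identification-of-the-limit / drift-convergence step, specifically controlling the noise accumulated over the rescaled time horizon and ensuring the convergence $\tilde F_i^b \to G_i$ is strong enough to survive the limit inside the integral. The delicate point flagged implicitly by the authors' remark that \cite{DecentralizedLearning94}'s analogous theorem is flawed in its convergence conclusion is that weak convergence to the ODE is genuinely a statement about finite time horizons and about trajectories, not about asymptotic stationary behavior; so I must be careful to state the theorem (correctly) as convergence of $Q^b(\cdot)$ on compact time intervals to the ODE solution, and not over-claim convergence to equilibria, which is deferred to the later theorems. The remaining steps (tightness, martingale-noise vanishing) are routine given the compactness of $K$ and the boundedness of increments.
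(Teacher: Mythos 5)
Your proposal is correct in substance, but it takes a genuinely different route from the paper, which does \emph{not} go through the stochastic-approximation/ODE-method machinery you invoke. The paper's proof consists of verifying the hypotheses of a Markov-chains-to-diffusions theorem of Stroock--Varadhan (Theorem \ref{th:theo1}, cited from \cite{stroock1979mdp} via \cite{comets2006csm}): the rescaled conditional drift $d^{(b)}(Q)=\frac1b\Ec{\Delta Q}{Q}$ converges to $(p_iG_i(Q))_i$, the rescaled covariance $a^{(b)}$ is $\mathcal{O}(b)$ and hence vanishes, third moments stay bounded, and jumps of size $\geq\epsilon$ have probability zero for small $b$ (since invariance of the compact $K$ forces the increments $bF_i^b$ to be uniformly small); the limiting ``diffusion'' then has $\sigma=0$, i.e.\ is exactly the ODE \eqref{eqgene}, unique by Cauchy--Lipschitz. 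Your decomposition $\Delta q_i = b p_i\tilde F_i^b(Q)+b\xi_i$ into drift plus martingale-difference noise, tightness via uniformly Lipschitz interpolants and Prohorov, and identification of the limit by killing the accumulated noise (quadratic variation $\mathcal{O}(b)$ per unit of rescaled time) is the by-hand Kushner--Yin route; note that it rests on exactly the same two computations (drift convergence and $\mathcal{O}(b)$ conditional second moments) that the paper feeds into its citation, so the two proofs are cousins rather than strangers. What your route buys: it is self-contained, and it makes the martingale structure of the scheme explicit, which is very much in the spirit of the rest of the paper, whose quantitative results (Lemma \ref{prop:keylemma}, Proposition \ref{th:timetwo}, Theorem \ref{th:timeone}) are precisely super-martingale arguments on the same chain. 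What the paper's citation buys: the functional weak convergence comes in one stroke, and the cited theorem would also handle a non-degenerate diffusive limit if the variance did not vanish. Two caveats are shared by both arguments and are not specific gaps of yours: uniform-on-compacts convergence of the drift is required but only pointwise convergence $\tilde F_i^b \to G_i$ is assumed, and uniqueness of solutions of \eqref{eqgene} needs $G_i$ Lipschitz (or another uniqueness criterion), not merely continuous --- you flag this explicitly, whereas the paper silently appeals to Cauchy--Lipschitz under a continuity-only hypothesis.
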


To prove the theorem, we will use the following theorem from \cite[theorem 11.2.3]{stroock1979mdp}. The following presentation is inspired by the presentation of it in \cite[Theorem 5.8, page 96]{comets2006csm}.

Suppose that for all integers $b>0$, we have an homogeneous Markov
chain $(Y_k^{(b)})$ in $\R^d$ with transition kernel $\pi^{(b)}(x,dy)$, meaning that the law of $Y_{k+1}^{(b)}$, conditioned on
$Y_0^{(b)},\cdots,Y_k^{(b)}$, depends only on $Y_{k}^{(b)}$ and is
given, for all Borelian $B$, by
$P(Y_{k+1}^{(b)} \in B|Y_k^{(b)})=\pi^{(b)}(Y_k^{(b)},B),$
almost surely.

Define for $x \in \R^d$, 
\begin{align*}
  d^{(b)}(x) & = \frac1b \int (y-x) \pi^{(b)}(x,dy), \\
a^{(b)}(x) &= \frac1b \int (y-x)  (y-x)^* \pi^{(b)}(x,dy), \\
K^{(b)}(x) &=  \frac1b \int (y-x)^3 \pi^{(b)}(x,dy), \\
\Delta_\epsilon^{(b)}(x) &= \frac1b \pi^{(b)}(x,B(x,\epsilon)^c),
\end{align*}
where $B(x,\epsilon)^c$ denotes the complement of the ball with radius  $\epsilon$, centered at $x$. 

The coefficients $d^{(b)}$ and $a^{(b)}$ can be interpreted as the instantaneous drift and the variance (or matrix of covariance) of $X^{(b)}$.

Define 
 $$X^{(b)}(t) = Y^{(b)}_{\lfloor t/b \rfloor}  + (t/b - \lfloor t/b \rfloor) (Y^{(b)}_{\lfloor t/b+1 \rfloor} - Y^{(b)}_{\lfloor t/b\rfloor}).
  $$

\begin{theorem}[{\cite[theorem 11.2.3]{stroock1979mdp}, \cite[Theorem 5.8, page 96]{comets2006csm}}] \label{th:theo1}
Suppose that there exist some continuous functions $d,b$, such that for all $R<+\infty$,
\begin{align*}
  \lim_ {b\to 0} sup_{|x| \le R} |a^{(b)}(x)-a(x)| &= 0 \\
\lim_ {b\to 0} sup_{|x| \le R} |d^{(b)}(x)-d(x)| &= 0\\ 
\lim_{b\to 0} sup_{|x| \le R} \Delta_\epsilon^{(b)} &= 0, \forall \epsilon >0\\
\sup_{|x| \le R} K^{(b)}(x) &< \infty.
\end{align*}

With $\sigma$ a matrix such that $\sigma(x)\sigma^*(x)= a(x)$, $x \in \R^d$, we suppose that the stochastic differential equation 
\begin{equation}
\label{eq:diffusion}
dX(t) = d (X(t)) dt + \sigma(X(t)) dB(t), ~~~~ X(0)=x,
\end{equation}
has a unique weak solution for all $x$. This is in particular the case, if it admits a unique strong solution. 

Then for all sequences of initial conditions  $Y_0^{(b)} \to x$, the sequence of random processes $X^{(b)}$ \emph{weakly converges} to the diffusion given by Equation \eqref{eq:diffusion}. In other words, for all functions $F: \mathcal{C}(\R^+,\R) \to \R$ bounded and continuous, one has $$\lim_{b \to 0} E[F(X^{(b)})] = E[F(X)].$$
\end{theorem}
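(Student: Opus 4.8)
The plan is to prove this by the classical martingale-problem method of Stroock and Varadhan, in three stages: tightness of the family $\{X^{(b)}\}$, identification of every subsequential weak limit as a solution of the martingale problem for the operator $L = \sum_i d_i(x)\partial_i + \tfrac12\sum_{i,j}a_{ij}(x)\partial_i\partial_j$, and finally uniqueness to pin down the limit. For the first stage I would establish that $\{X^{(b)} : b>0\}$ is tight in the Skorokhod space of $\mC$\`ad-l\`ag paths. The bound $\sup_{|x|\le R} K^{(b)}(x)<\infty$ controls the third moments of the one-step increments, which through a standard moment (Aldous-type) criterion bounds the oscillation of $X^{(b)}$ over short time windows uniformly in $b$; the compact-uniform convergence of $a^{(b)}$ supplies the needed second-moment control, and a localization argument over the balls $|x|\le R$ handles the non-compact state space $\R^d$. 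The hypothesis $\lim_{b\to 0}\sup_{|x|\le R}\Delta_\epsilon^{(b)}(x)=0$ for every $\epsilon>0$ guarantees that the probability of a jump of size $\ge\epsilon$ on any bounded horizon tends to $0$, so that every limit point is supported on the continuous paths $\mC(\R^+,\R^d)$. By Prohorov's theorem, tightness yields relative compactness: every subsequence has a further weakly convergent subsequence.

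For the identification stage, I would take a test function $f\in C_c^\infty(\R^d)$, Taylor-expand $f(y)-f(x)$ to second order, and integrate against the kernel $\pi^{(b)}(x,\cdot)$; using the definitions of $d^{(b)}$ and $a^{(b)}$ and bounding the remainder by $K^{(b)}$, this gives
$$\frac1b\int (f(y)-f(x))\,\pi^{(b)}(x,dy) = \sum_i d^{(b)}_i(x)\,\partial_i f(x) + \tfrac12\sum_{i,j}a^{(b)}_{ij}(x)\,\partial_{ij}f(x) + o(1),$$
uniformly on compacts. Writing $L^{(b)}$ for the displayed discrete generator, this shows that the grid-sampled process $f(X^{(b)}(t)) - f(X^{(b)}(0)) - \int_0^t (L^{(b)}f)(X^{(b)}(s))\,ds$ is, up to an $o(1)$ error, a discrete-time martingale. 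Passing to the limit along a weakly convergent subsequence $X^{(b)}\Rightarrow \tilde X$, using the uniform convergence $d^{(b)}\to d$, $a^{(b)}\to a$ and the continuity of $f$, $d$, $a$ to pass the limit inside the time integral, and using uniform integrability furnished by the moment bounds, I would conclude that $f(\tilde X(t)) - f(\tilde X(0)) - \int_0^t (Lf)(\tilde X(s))\,ds$ is a genuine martingale for every such $f$. Together with $\tilde X(0)=x$, which follows from $Y_0^{(b)}\to x$, this says exactly that the law of $\tilde X$ solves the martingale problem for $L$ started at $x$.

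The uniqueness stage closes the argument. By the Stroock--Varadhan equivalence between well-posedness of the martingale problem for $L$ and existence of a unique weak solution of the SDE \eqref{eq:diffusion}, the assumed uniqueness of the weak solution of \eqref{eq:diffusion} forces the martingale problem to have a unique solution. Hence all subsequential weak limits of $\{X^{(b)}\}$ coincide with the law of the diffusion $X$; combined with the relative compactness obtained from tightness, a standard subsequence argument then forces the full family $X^{(b)}$ to converge weakly to $X$. This is precisely the conclusion $\lim_{b\to 0}E[F(X^{(b)})]=E[F(X)]$ for all bounded continuous $F$.

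I expect the delicate point to be the martingale-problem identification rather than the tightness estimate. One must upgrade the pointwise, compact-uniform convergence $L^{(b)}f\to Lf$ into convergence of the path functionals $\int_0^t (L^{(b)}f)(X^{(b)}(s))\,ds$ along the weak limit; since the hypotheses are only uniform on $|x|\le R$ and not globally, this requires a localization and stopping-time argument, together with the uniform integrability needed to ensure the limiting object is a true martingale and not merely a local one. The continuity of the limiting paths, which is needed both to discard the jump contributions and to evaluate the time integrals against the continuous coefficients, rests entirely on the $\Delta_\epsilon^{(b)}\to 0$ hypothesis and must be invoked with care.
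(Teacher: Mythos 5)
The paper itself contains no proof of this theorem: it is quoted directly from \cite{stroock1979mdp} (Theorem 11.2.3) and \cite{comets2006csm} (Theorem 5.8), and your outline --- tightness from the moment and jump conditions, identification of every subsequential limit as a solution of the martingale problem for $L = \sum_i d_i(x)\partial_i + \tfrac12\sum_{i,j}a_{ij}(x)\partial_i\partial_j$ via Taylor expansion of test functions against $\pi^{(b)}$, and uniqueness transferred from weak uniqueness of the SDE through the Stroock--Varadhan equivalence, closed by Prohorov and a subsequence argument --- is precisely the martingale-problem proof given in those references, so your approach matches the source the paper relies on. The only cosmetic point is that $X^{(b)}$ is defined by piecewise-linear interpolation and therefore already has continuous paths, so tightness can be run directly in $\mathcal{C}(\R^+,\R^d)$, with the hypothesis $\lim_{b\to 0}\sup_{|x|\le R}\Delta_\epsilon^{(b)}(x)=0$ serving to control the modulus of continuity uniformly in $b$ rather than to discard jumps of a c\`adl\`ag limit.
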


Theorem \ref{propindiens} follows from previous theorem. Consider $(Y_k^{(b)})$ to be 
$$Y_k^{(b)} =(Q(k))$$ with the corresponding $b$, which is indeed an homogeneous Markov chain. Let $\pi^{(b)}(Q,dy)$ be its transition kernel. 

We have
$$
\equations{
d^{(b)}_i(Q) & =&\frac 1b \int (y_i- q_i) \pi^{(b)}(x,dy) \\
&=& \frac1{b} \Ec{ \Delta q_i}{Q} \\
& =& \frac{p_i}{b}  b\tilde{F}_i^b(Q) \\
&\to& p_i G_i(Q) \mbox{ \hspace{1cm}  when $b \to 0$}\\
}$$
and
$$
\equations{
a^{(b)}_{i,j}(Q) &=& \frac1b \int (y_i- q_i)  (y_j- q_j)^* \pi^{(n)}(x,dy) \\
&=& \frac{b^2}{b} \Ec{p_i p_j \Delta q_i \Delta q_j}{Q} \\
&=& \mathcal{O}(b) \\
&\to & 0 \mbox{ when $b \to 0$}
}$$

In the same vein, clearly $K^{(b)}(Q) $ stay bounded, being in $\mathcal{O}(b^2)$. 

Now, from the fact that compact $K$ must be kept invariant by the dynamics,  $F_i^b(.)$ must have a compact support. This means that $\pi^{(b)}(Q,B(Q,\epsilon)^c)$ is $0$ for $b$ sufficiently small. Hence $\lim_{b\to 0} sup_{|x| \le R} \Delta_\epsilon^{(b)} =0$, $\forall \epsilon >0$. 

Hence, we have all the hypotheses of previous theorem with $a(Q)=0$ and $$d(Q)=(p_1G_1(Q),\cdots,p_nG_n(Q)),$$ observing that the corresponding stochastic differential equation $dQ(t) = d (Q(t)) dt + \sigma(Q(t)) dB(t)$ turns out to be an ordinary differential equation, whose solution is unique by (classical) Cauchy Lipschtiz theorem.

\section{General Games and Replicator-Like Dynamics}

From now on, we restrict to (possibly perturbed) replicator-like dynamics, as defined in page \pageref{def:replicatorlike}.

For replicator-like dynamics set $\alpha=1$ in what follows.

For replicator-like dynamics  and perturbed replicator-like dynamics, the one-step dynamics of the stochastic algorithm can be
rewritten componentwise:

$$
\Delta q_{i,\ell} (t) = q_{i,\ell}(t+1) - q_{i,\ell}(t) = \alpha
\left\{
\begin{array}{lllll}
0 & & + \mathcal{O}(b) &\mbox{ if } & i\neq i(t) \\
-
b\gamma(r_i(t)) 
q_{i,\ell} (t) && + \mathcal{O}(b) &\mbox{ if } &i=i(t)  \mbox{ and } s_i(t) \neq l\\
-
b\gamma(r_i(t)) 
q_{i,\ell} (t) + 
b(
\bb{ r_i(t)} ) && + \mathcal{O}(b) &\mbox{ if }  & i=i(t)   \mbox{ and } s_i(t) = l,\\
\end{array}
\right.
$$
and we have
$$
\equations{
G_i(Q)  &=& \lim_{b \to 0} \frac1{b p_i} \Ec{\Delta q_{i,\ell}(t)}{Q(t)} \\
&=& \lim_{b \to 0}  \frac{1}b \sum_{j} q_{i,j}(t) \Ec{\Delta q_{i,\ell}(t)}{Q(t),s_i(t)=j,i(t)=i} \\
&=& + \alpha \sum_{j} q_{i,j}(t) q_{i,\ell}(t) \Ec{\gamma(r_i(t))}{Q(t),s_i(t)=\ell,i(t)=i}) \\
&&  - \alpha \sum_{j} q_{i,j}(t)  (q_{i,\ell}(t) \Ec{\gamma(r_i(t))}{Q(t),s_i(t)=j,i(t)=i}) \\
&=& q_{i,\ell} (\Ec{\gamma(r_i(t))}{Q(t),s_i(t)=\ell,i(t)=i}- \Ec{\gamma(r_i(t))}{Q(t),i(t)=i}).\\
}$$
that is to say, if we introduce $\uu_i(Q)=\Ec{-\frac1\alpha \gamma(r_i(Q))}{Q}$ for all $Q$, then
Equation \eqref{eqgene} leads to dynamics, 
%
by Theorem \ref{propindiens}.

This ordinary differential equation turns out to be (a rescaling of) (multipopulation) classical replicator dynamic
\begin{equation} \label{eq:replicatorres}
\frac{dq_{i,\ell}}{dt} = -p_i q_{i,\ell} ( \uu_i(e_{\ell},Q_{-i}) - \uu_i(q_i,Q_{-i})),
\end{equation}
whose limit points are related to Nash equilibria (through so-called Folk's theorems of evolutionary game theory \cite{Evolutionary1})
.

Here, $\uu_i(Q)$ is taken as $\uu_i(Q)=\Ec{\gamma(r_i(Q))}{Q}$ for replicator-like dynamics, and $\uu_i(Q)=\Ec{\frac1\alpha \gamma(r_i(Q))}{Q}$ for perturbed replicator-like dynamics.  The game whose costs are defined by  $\uu_i$ is clearly isomorphic to the original game. Notice that when $\gamma$ is affine, this is just introducing a(n other) rescaling in \eqref{eq:replicatorres}.

Using properties of dynamics \eqref{eq:replicatorres}, we get:


\begin{theorem}\label{th:deux}
For general games, for any replicator-like or perturbed replicator-like dynamic, the sequence of interpolated processes \{$Q^b(.)$\} converges weakly, as $b \to 0$, to the unique deterministic solution of \eqref{eq:replicatorres} with $Q(0)=Q^b(0)$.
\emph{If} the mean-field approximation dynamic \eqref{eq:replicatorres}  converges, its stable limit points correspond to Nash equilibria of the game.
\end{theorem}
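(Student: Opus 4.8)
The plan is to split the statement into its two halves: the weak-convergence assertion, which I expect to follow almost immediately from the machinery already in place, and the characterization of stable limit points, which is the genuine content and amounts to a Folk-theorem argument for multipopulation replicator dynamics.

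For the first half I would specialize Theorem \ref{propindiens} to the (possibly perturbed) replicator-like case. The computation of $G_i(Q)$ carried out just before the statement already shows that $\lim_{b\to 0}\mathbb{E}[F_i^b \mid Q]$ exists and equals the right-hand side of \eqref{eq:replicatorres}; in particular the $\mathcal{O}(b)$ perturbation and, in the perturbed case, the mutation term $b(e_{s_j}-q_i)$ are of order $b$ and drop out in the limit, so that perturbed and unperturbed dynamics share the same mean-field ODE up to the rescaling by $1/\alpha$ absorbed into $\uu_i$. It then remains only to check that $G_i$ is continuous --- which holds because it is polynomial in the $q_{i,\ell}$ with coefficients that are continuous conditional expectations of the bounded function $\gamma$ --- and that \eqref{eq:replicatorres} admits a unique solution, exactly the Cauchy--Lipschitz conclusion already invoked in the proof of Theorem \ref{propindiens}. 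Weak convergence then transfers verbatim.

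For the second half I would argue the contrapositive through the standard Folk theorem, applied to the isomorphic game defined by the $\uu_i$. First I would characterize the rest points of \eqref{eq:replicatorres}: $Q^*$ is a rest point iff, for every player $i$ and every pure strategy $\ell$ in the support of $q_i^*$ (those with $q^*_{i,\ell}>0$), one has $\uu_i(e_\ell,Q^*_{-i}) = \uu_i(q_i^*,Q^*_{-i})$, i.e. all strategies actually used yield the common average payoff. Next, assume $Q^*$ is a rest point that is not a Nash equilibrium; then some player $i$ has a strictly better reply $\ell$, and the support condition forces $q^*_{i,\ell}=0$. Finally I would establish instability: by continuity of the $\uu_i$, strategy $\ell$ stays a strict better reply throughout a neighborhood of $Q^*$, so along \eqref{eq:replicatorres} the coordinate $q_{i,\ell}$ obeys $\frac{dq_{i,\ell}}{dt}=-p_i q_{i,\ell}(\uu_i(e_\ell,Q_{-i})-\uu_i(q_i,Q_{-i}))$ with a sign making it strictly positive for small $q_{i,\ell}>0$; hence any trajectory started near $Q^*$ with $q_{i,\ell}>0$ is pushed away from the face $q_{i,\ell}=0$ and cannot converge to $Q^*$. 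Contraposing, every stable limit point is a Nash equilibrium.

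The main obstacle, I expect, is making this instability step fully rigorous and pinning down the right notion of stable limit point. One must ensure that a convergent trajectory's limit is genuinely a rest point (immediate for an autonomous continuous field), that the better-reply strategy lies strictly outside the support so the push-off argument applies, and above all that the sign bookkeeping between costs, the decreasing function $\gamma$, the positive weights $p_i$, and the rescaling by $1/\alpha$ stays consistent, so that a \emph{strictly better reply} indeed translates into $\frac{dq_{i,\ell}}{dt}>0$ and not the opposite sign. Once these are settled the conclusion is precisely the Lyapunov-stability half of the Folk theorem of evolutionary game theory (see \cite{Evolutionary1,LivreWeibull}), and nothing beyond continuity of the $\uu_i$ and positivity of the $p_i$ is required.
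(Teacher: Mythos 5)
Your proposal is correct and takes essentially the same route as the paper: the first half is exactly the specialization of Theorem \ref{propindiens} via the computation of $G_i$ preceding the statement, and your second half reproduces the paper's folk-theorem argument (Proposition \ref{prop:deux} with Corollary \ref{coro:nash}), where a non-Nash stationary point is rejected because some coordinate satisfies $\frac{dq_{i,\ell}}{dt} \ge p_i \epsilon\, q_{i,\ell}$ on a neighborhood, forcing exponential escape. The only cosmetic difference is that the paper makes this exponential lower bound explicit (which settles the instability concern you flag), whereas you phrase the push-off from the face $q_{i,\ell}=0$ as non-convergence; upgrading your step to the paper's exponential estimate closes that remaining detail.
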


More precisely, we  have:

\begin{proposition} \label{prop:deux}
The following are true for the solutions  of Equation \eqref{eq:replicatorres}:
(i) All Nash equilibria are stationary points.
(ii) All stable stationary points are Nash equilibria.  
(iii) However, (unstable) stationary points can include some Non-Nash equilibria.
\end{proposition}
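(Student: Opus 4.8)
The plan is to treat the three claims separately, exploiting throughout the multiplicative structure $\dot q_{i,\ell}=-p_i q_{i,\ell}(\cdots)$ of \eqref{eq:replicatorres}. I would first record two preliminary facts. Invariance: since $\sum_\ell \dot q_{i,\ell}=0$ whenever $\sum_\ell q_{i,\ell}=1$ (because the average cost $\uu_i(q_i,Q_{-i})$ equals $\sum_\ell q_{i,\ell}\uu_i(e_\ell,Q_{-i})$), each simplex $K_i$, and hence $K$, is kept invariant, so the notion of stationary point is well posed. Characterization of Nash equilibria: because costs are linear in $q_i$, a profile $Q^*$ is Nash iff for every player $i$ the support of $q^*_i$ lies in $\arg\min_\ell \uu_i(e_\ell,Q^*_{-i})$; equivalently every in-support $\ell$ realizes $\uu_i(e_\ell,Q^*_{-i})=\uu_i(q^*_i,Q^*_{-i})$ while every out-of-support $\ell$ satisfies $\uu_i(e_\ell,Q^*_{-i})\ge \uu_i(q^*_i,Q^*_{-i})$.

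For (i) I would substitute a Nash equilibrium $Q^*$ into the right-hand side of \eqref{eq:replicatorres}: each component either has $q^*_{i,\ell}=0$, killing the term through the leading factor, or has $q^*_{i,\ell}>0$, in which case the bracket $\uu_i(e_\ell,Q^*_{-i})-\uu_i(q^*_i,Q^*_{-i})$ vanishes by the equal-cost-in-support property. Either way $\dot q_{i,\ell}=0$, so $Q^*$ is stationary.

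For (ii) I would argue the contrapositive. Suppose $Q^*$ is stationary but not Nash. Stationarity already forces the bracket to vanish wherever $q^*_{i,\ell}>0$, so every in-support pure strategy realizes the average cost; the only remaining way to fail the Nash condition is that some player $i$ has an out-of-support strategy $\ell$ (hence $q^*_{i,\ell}=0$) with strictly smaller cost, $\uu_i(e_\ell,Q^*_{-i})<\uu_i(q^*_i,Q^*_{-i})$. I would then perturb $Q^*$ by placing a small mass $\delta>0$ on $\ell$ along the edge direction $e_\ell-q^*_i$, which keeps the perturbation in $K$. By continuity of the $\uu_i$ the bracket stays strictly negative on a neighborhood of $Q^*$, so along the perturbed orbit $\dot q_{i,\ell}=-p_i q_{i,\ell}\cdot(\text{negative})>0$; in fact $q_{i,\ell}$ obeys a local bound $\dot q_{i,\ell}\ge c\, q_{i,\ell}$ with $c>0$ and grows at least exponentially. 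Hence trajectories starting arbitrarily close to $Q^*$ leave every small neighborhood, so $Q^*$ is unstable; equivalently every stable stationary point is a Nash equilibrium.

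For (iii) I would exhibit a counterexample rather than prove an implication. The key observation is that every vertex of $K$ (every pure strategy profile) is automatically stationary: at a pure profile the played component of each player yields a vanishing bracket, while every other component carries the zero factor $q_{i,\ell}$. It therefore suffices to take any game possessing a pure profile that is not a pure Nash equilibrium (such games abound, e.g. any profile admitting a strictly profitable pure deviation); that vertex is a stationary point of \eqref{eq:replicatorres} which is not a Nash equilibrium of the game. I expect the main obstacle to be step (ii): turning the informal statement that the bracket stays negative, so the coordinate grows, into a clean instability claim. The exponential-growth estimate above, combined with the invariance of $K$, is the cleanest route and sidesteps a full Jacobian computation at $Q^*$, which may be degenerate on the boundary.
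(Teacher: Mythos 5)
Your proposal is correct and follows essentially the same route as the paper's own proof: (i) by the equal-cost-on-support characterization of Nash equilibria (the paper's Corollary \ref{coro:nash}), (ii) by the exponential-escape bound $\frac{dq_{i,\ell}}{dt}\ge p_i\,\epsilon\, q_{i,\ell}$ on a neighborhood where some strictly better strategy exists, and (iii) by noting that corners of $K$ (and more generally profiles whose supported strategies perform equally well) are stationary yet need not be Nash. Your two small additions, the simplex-invariance remark and the explicit reduction in (ii) to an out-of-support deviation, are harmless refinements of the same argument.
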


The following are well-known (and obtained by just playing with definitions).

\begin{lemma}
A strategy profile $Q$ is a Nash Equilibrium iff $\uu_i(q_i,Q_{-i})  \le \uu_i(e_{\ell},Q_{-i})  $ for all $1 \le i \le n$, $1 \le \ell \le m_i$.
\end{lemma}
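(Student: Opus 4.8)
The plan is to prove the equivalence by unwinding the definition of a Nash equilibrium and exploiting the fact that each $\uu_i(\cdot, Q_{-i})$ is an affine function of player $i$'s own mixed strategy. Recall that a strategy profile $Q = (q_i, Q_{-i})$ is a Nash equilibrium precisely when no player can strictly lower her cost by a unilateral deviation, i.e. when $\uu_i(q_i, Q_{-i}) \le \uu_i(q_i', Q_{-i})$ for every player $i$ and every mixed strategy $q_i' \in K_i$. The single structural fact I would isolate first is that, because $\uu_i(q_i, Q_{-i})$ is obtained as an expectation over the pure strategy $\ell \in \mS_i$ drawn according to $q_i$, it decomposes as the convex combination $\uu_i(q_i, Q_{-i}) = \sum_{\ell=1}^{m_i} q_{i,\ell}\, \uu_i(e_\ell, Q_{-i})$. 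In other words $\uu_i$ is multiaffine in the mixed strategies, a property already exploited elsewhere in the paper.

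With this in hand, the forward implication is immediate: if $Q$ is a Nash equilibrium, then specializing the no-deviation inequality to the pure deviations $q_i' = e_\ell$ — which are themselves legitimate corners of the simplex $K_i$ — yields exactly $\uu_i(q_i, Q_{-i}) \le \uu_i(e_\ell, Q_{-i})$ for all $i$ and all $\ell$, which is the asserted condition.

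For the converse I would start from the hypothesis $\uu_i(q_i, Q_{-i}) \le \uu_i(e_\ell, Q_{-i})$ for every $\ell$ and lift it from pure to arbitrary mixed deviations using the convex-combination identity. Given any $q_i' \in K_i$, I write $\uu_i(q_i', Q_{-i}) = \sum_\ell q_{i,\ell}' \uu_i(e_\ell, Q_{-i}) \ge \sum_\ell q_{i,\ell}' \uu_i(q_i, Q_{-i}) = \uu_i(q_i, Q_{-i})$, where the inequality uses $q_{i,\ell}' \ge 0$ together with the hypothesis, and the final equality uses $\sum_\ell q_{i,\ell}' = 1$. Since this holds for every player and every deviation, $Q$ is a Nash equilibrium.

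I expect no genuine obstacle here — this is a folklore argument of the kind the paper flags as ``obtained by just playing with definitions.'' The only points requiring minor care are the direction of the inequality (the $\uu_i$ play the role of costs to be minimized, so the equilibrium condition reads $\le$ rather than $\ge$) and the justification that $\uu_i(\cdot, Q_{-i})$ is genuinely affine in $q_i$; both are settled by the expectation / convex-combination structure isolated in the first step.
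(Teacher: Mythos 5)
Your proof is correct and is exactly the folklore argument the paper has in mind: the paper states this lemma without proof, remarking only that it is ``well-known (and obtained by just playing with definitions),'' and your two steps (specializing to pure deviations $e_\ell$, then lifting back to mixed deviations via the convex-combination identity $\uu_i(q_i',Q_{-i})=\sum_\ell q'_{i,\ell}\,\uu_i(e_\ell,Q_{-i})$ from linearity of expectation) are precisely that definitional argument, with the cost-minimization convention handled correctly.
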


\begin{corollary} \label{coro:nash}
In a Nash Equilibrium, we have $\uu_i(q_i,Q_{-i})  = \uu_i(e_{\ell},Q_{-i})  $ for all $1 \le i \le n$, $1 \le \ell \le m_i$ with $q_{i,\ell}>0$.
\end{corollary}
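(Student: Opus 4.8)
The plan is to derive this corollary directly from the Lemma just stated, combined with the elementary fact that each player's expected cost is \emph{affine} in that player's own mixed strategy. So the first step I would carry out is to make that affinity explicit. Under the profile $(q_i,Q_{-i})$ player $i$ selects her pure strategy $\ell$ with probability $q_{i,\ell}$, so by linearity of expectation one has
$$\uu_i(q_i,Q_{-i}) = \sum_{\ell=1}^{m_i} q_{i,\ell}\,\uu_i(e_\ell,Q_{-i}).$$
In words, the cost of the mixed strategy is a convex combination of the pure-strategy costs $\uu_i(e_\ell,Q_{-i})$, with the mixing weights $q_{i,\ell}$. This is where the real content of the statement sits; everything afterwards is a one-line convexity observation.

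Next I would invoke the Lemma, which at a Nash equilibrium $Q$ gives $\uu_i(q_i,Q_{-i}) \le \uu_i(e_\ell,Q_{-i})$ for every player $i$ and every $\ell$. Subtracting $\uu_i(q_i,Q_{-i})$ from the affinity identity above and using $\sum_\ell q_{i,\ell}=1$ yields
$$0 = \sum_{\ell=1}^{m_i} q_{i,\ell}\bigl(\uu_i(e_\ell,Q_{-i}) - \uu_i(q_i,Q_{-i})\bigr).$$
Each summand is nonnegative, being the product of $q_{i,\ell}\ge 0$ with a difference that the Lemma guarantees to be $\ge 0$. A finite sum of nonnegative reals equals $0$ only if every term equals $0$, so for each $\ell$ with $q_{i,\ell}>0$ we must have $\uu_i(e_\ell,Q_{-i}) - \uu_i(q_i,Q_{-i}) = 0$, i.e. $\uu_i(q_i,Q_{-i}) = \uu_i(e_\ell,Q_{-i})$, which is precisely the claim.

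I do not expect a serious obstacle here, since the result is the classical ``all strategies in the support earn the equilibrium payoff'' fact and follows mechanically once the affinity identity is in hand. The only point genuinely worth stating carefully is that affinity identity itself, which rests on the modelling assumption (made when the game framework was set up) that a mixed strategy is played by drawing a pure strategy from the distribution $q_i$, so that expected costs are multilinear in the players' strategies; the corollary would fail for cost functions that are nonlinear in a player's own randomization. Everything else is the elementary observation that a convex combination equal to its own lower bound must put zero weight on every strictly larger term.
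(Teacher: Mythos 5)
Your proof is correct and is exactly the ``playing with definitions'' argument the paper has in mind: it gives no explicit proof of Corollary~\ref{coro:nash}, declaring it well-known, and relies on the same multilinearity of $\uu_i$ that it invokes elsewhere (e.g.\ in the proof of Proposition~\ref{prop:deux}). Writing $\uu_i(q_i,Q_{-i})$ as the convex combination $\sum_{\ell} q_{i,\ell}\,\uu_i(e_\ell,Q_{-i})$ and noting that a sum of nonnegative terms vanishing forces each term with $q_{i,\ell}>0$ to vanish is the standard and intended route.
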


Proposition \ref{prop:deux} is then an instance of the so-called folk-theorems of Evolutionary Game Theory \cite{Evolutionary1}. For completeness, the proof goes as follows:
From Corollary \ref{coro:nash}, clearly any Nash equilibria must also vanish the right-hand side of Equation \eqref{eq:replicatorres}.

A non-Nash equilibrium $Q$ is not stable: Indeed, if $Q$ is not a Nash equilibrium, this means that for some $i$, and some $\ell$ we have $\uu_i(q_i,Q_{-i})  > \uu_i(e_{\ell},Q_{-i})$. By bilinearity and continuity of $\uu_i$, function $\uu_i(q_i-e_{\ell},Q_{-i})$ must be strictly positive (say greater than $\epsilon$) on some neighborhood of $Q$. On this neighborhood, $\frac{dq_{i,\ell}}{dt}$ is greater than $p_i 
q_{i,\ell} \epsilon$, and hence the point is left exponentially faster  (faster than exponential $q_{i,\ell}(0)\exp(p_i 
\epsilon t)$).

 In a corner of $K$, we have for all $i$, $q_{i}=e_{\ell}$ for some $\ell$. Then clearly $q_{i,\ell'}=0$ for index $\ell'\neq \ell$, and $\uu_i(e_{\ell},Q_{-i}) - \uu_i(q_i,Q_{-i})=0$ for index $\ell'=\ell'$. Hence, the right-hand side of Equation \eqref{eq:replicatorres} is always null, and hence any corner is a stationary point.

More generally any state $Q$ in which all strategies in its support perform equally well, is clearly a stationary point from the definition of the dynamic.


Actually, all corners of simplex $K$ are stationary points, as well as, from the form of \eqref{eq:replicatorres}, more generally any state $Q$ in which all strategies in its support perform equally well.  Such a state $Q$ is not a Nash equilibrium as soon as there is an not used strategy (i.e. outside of the support) that performs better. 

Unstable limit stationary points may exist for the mean-field approximation \eqref{eq:replicatorres}
: Consider for example a dynamics that leave on some face of $K$ where some well-performing strategy is never used. 
To avoid ``bad'' (non-Nash equilibrium, hence unstable) stationary points, following the idea of penalty functions for interior point methods, one can use as in  Appendix A.3 of \cite{sandholm2001pgc} some 
patches on the dynamics that would guarantee Non-complacency
. \emph{Non-Complacency (NC)} is the following property: $G(Q)=0$ implies that $Q$ is a Nash equilibrium \eqref{eq:replicatorres} (i.e. stationarity implies Nash).

This can be thought as the price to pay for purely deterministic models\footnote{And perhaps somehow as artifacts of modeling.}, and actually, when dealing with stochastic dynamics, all this can be avoided by taking profit of the unstability of non-Nash stationary points: this is the idea behind the randomized replicator dynamics already defined.
This guaranteed unstable points to be left almost-surely by the associated stochastic algorithm: technically, this ensures ergodicity of the underlying Markov Chain. Notice that a purely deterministic replicator-like dynamics where $\mathcal{O}(b)=0$ is not: an unstable stationary point, like a corner of $K$ is invariant for ever, and the underlying Markov is hence not  irreducible.

For general games, we get that the limit for $b \to 0$ is some ordinary differential equation whose stable limit points, when $t \to \infty$, \emph{IF} there exist, can only be Nash equilibria. Hence, \textit{IF} there is convergence of the ordinary differential equation, then one expects the previous stochastic algorithms to learn equilibria.

Observe, that roughly speaking, for non-degenerated games, learning interior (hence mixed) Nash equilibria by such method is often problematic (and hence practically only pure Nash equilibrium may be learned) since the following is known:
\begin{proposition}[{\cite{amann1985plv,hofbauer1988tea},\cite[page 218]{LivreWeibull} }]
If a closed set $X \subset K$ belongs to the relative interior of some face of $K$, then $X$ is not asymptotically stable by dynamics \eqref{eq:replicatorres}.
\end{proposition}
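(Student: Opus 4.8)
The plan is to reduce the statement to the claim that the multipopulation replicator dynamics \eqref{eq:replicatorres} admits no asymptotically stable subset in the \emph{interior} of a product of simplices, and then to prove the latter by a Liouville (volume-preservation) argument exploiting the fact, established earlier, that each player's expected cost is affine in her own mixed strategy.

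First I would carry out the reduction. A face $F$ of $K=\prod_i K_i$ is a product $F=\prod_i F_i$, where $F_i$ is the subsimplex of $K_i$ supported on some $T_i\subseteq\{1,\dots,m_i\}$, and its relative interior consists of the profiles with $q_{i,\ell}>0$ for $\ell\in T_i$ and $q_{i,\ell}=0$ otherwise. Since the right-hand side of \eqref{eq:replicatorres} carries a factor $q_{i,\ell}$, every coordinate hyperplane $\{q_{i,\ell}=0\}$ is invariant, so $F$ is invariant and the restriction of \eqref{eq:replicatorres} to $F$ is exactly the replicator dynamics of the subgame in which player $i$ is confined to $T_i$, a dynamics of the same form. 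If $X\subset\mathrm{relint}(F)$ were asymptotically stable in $K$, then intersecting an attracted neighborhood with the invariant manifold $F$ shows $X$ is asymptotically stable for the restricted dynamics, with $\mathrm{relint}(F)$ now playing the role of the interior. Hence it suffices to prove that no closed set in the interior of a product of simplices of dimension $\ge 1$ is asymptotically stable. The dimension-$\ge 1$ proviso is necessary: a $0$-dimensional face is a vertex, i.e. a pure profile, which may well be asymptotically stable when it is a strict Nash equilibrium.

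The heart of the argument is a divergence computation. Writing $g_{i,\ell}(Q_{-i})=-p_i\,\uu_i(e_\ell,Q_{-i})$ and using affinity in the own strategy, $\uu_i(q_i,Q_{-i})=\sum_k q_{i,k}\,\uu_i(e_k,Q_{-i})$, equation \eqref{eq:replicatorres} reads $\dot q_{i,\ell}=q_{i,\ell}\big(g_{i,\ell}(Q_{-i})-\sum_k q_{i,k}\,g_{i,k}(Q_{-i})\big)$. On the interior I would pass to the log-ratio coordinates $u_{i,\ell}=\log q_{i,\ell}-\log q_{i,m_i}$ for $\ell=1,\dots,m_i-1$, which parametrize the interior diffeomorphically. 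A short calculation gives the strikingly simple form $\dot u_{i,\ell}=g_{i,\ell}(Q_{-i})-g_{i,m_i}(Q_{-i})$, whose right-hand side depends only on the \emph{other} populations' variables $u_{-i}$ and not on $u_i$. Consequently $\partial\dot u_{i,\ell}/\partial u_{i,\ell}=0$ for every $(i,\ell)$, the vector field is divergence-free, and by Liouville's theorem its flow $\phi_t$ preserves Lebesgue measure in the $u$-coordinates. This is precisely where the multipopulation (asymmetric) structure is essential: for a single self-interacting population the pure-strategy cost would itself depend on $q_i$, the divergence would not vanish, and interior ESS could be attracting.

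Finally I would derive non-stability from incompressibility. Suppose $\tilde X$, the image of $X$, were asymptotically stable for the measure-preserving flow; choose a bounded, forward-invariant open neighborhood $U\supseteq\tilde X$ all of whose orbits converge to $\tilde X$, uniformly on compacts. Since $\tilde X$ is compact and not open, $U\setminus\tilde X$ contains a ball $B$ of positive measure. When $X$ has empty interior in $F$ (an equilibrium point, or any lower-dimensional closed set), $\tilde X$ is Lebesgue-null, so it can be enclosed in a neighborhood $W$ with $\mathrm{Leb}(W)<\mathrm{Leb}(B)$; as $\phi_t(B)$ is eventually trapped in $W$ while $\mathrm{Leb}(\phi_t(B))=\mathrm{Leb}(B)$ by volume preservation, we reach a contradiction. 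For a set $X$ of positive measure the same conclusion follows from Poincaré recurrence applied to the compact invariant set $A=\bigcap_{t\ge0}\phi_t(U)$: almost every point of $A$ returns arbitrarily near itself, which is incompatible with convergence to $\tilde X$ unless $A$ and $\tilde X$ differ by a null set, forcing $\mathrm{Leb}(\tilde X)=\mathrm{Leb}(U)>0$ and hence the open set $U\setminus\tilde X$ to be null — impossible. Either way $X$ cannot be asymptotically stable. The main obstacle I anticipate is not any single computation but making this last, topological-dynamics step fully rigorous, namely securing a forward-invariant trapping neighborhood and ruling out positive-measure attractors, where Poincaré recurrence rather than a naive volume count is the clean tool.
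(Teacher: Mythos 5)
The paper contains no proof of this proposition for you to be compared against: it is quoted as a known result, imported directly from the cited references \cite{amann1985plv,hofbauer1988tea} and \cite[page 218]{LivreWeibull}. Your argument is, in essence, exactly the proof given in those sources, and it is correct: faces of $K$ are products of subsimplices, are invariant, and carry the replicator dynamics of the corresponding subgame; in log-ratio coordinates the multipopulation replicator field is divergence-free, precisely because by multilinearity the growth rates $\uu_i(e_\ell,Q_{-i})$ of player $i$ do not depend on $q_i$; Liouville's theorem then makes the flow on the relative interior volume-preserving, and a volume-preserving flow in dimension at least one admits no asymptotically stable compact set. The one simplification I would suggest is that your final case split (null $\tilde X$ versus positive-measure $\tilde X$, with Poincar\'e recurrence) is unnecessary: since $\tilde X$ is compact and $U$ is open, and no nonempty subset of $\R^d$ with $d\ge 1$ is both open and compact, the set $U\setminus\tilde X$ is nonempty and open, so $\mathrm{Leb}(\tilde X)<\mathrm{Leb}(U)$; outer regularity then gives an open $W\supseteq\tilde X$ with $\mathrm{Leb}(W)<\mathrm{Leb}(U)$, and uniform attraction of the compact $\overline{U}$ into $W$ contradicts $\mathrm{Leb}(\phi_t(U))=\mathrm{Leb}(U)$ in one stroke, for sets of any measure. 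Finally, your caveat about zero-dimensional faces is correct and worth recording: read literally (a vertex is a face whose relative interior is itself), the proposition would contradict the asymptotic stability of strict pure Nash equilibria, so the statement must be understood for faces of positive dimension --- which is how the cited sources prove it and how the paper uses it, namely to argue that only pure, not mixed, equilibria can practically be learned by these dynamics.
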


\section{ Lyapunov Games, Ordinal and Potential Games} 

Since general games  have no reason to converge, we propose now to restrict to games for which replicator equation dynamic  or more generally general dynamics \eqref{eqgene} is provably convergent. 
As this practically often relies on some Lyapunov function argument, we propose the following terminology.

\begin{definition}[Lyapunov Game] 
We say that a game has a \emph{Lyapunov function} (with respect to a particular dynamic \eqref{eqgene} over $K$), or that the game is \emph{Lyapunov}, if there exists some non-negative $\mC^1$ function $F: K \to \R$ such that for all $i, \ell$ and $Q$, whenever $G(Q) \neq 0$, 
\begin{equation} \label{eq:lyapunov}
\sum_{i,\ell} p_i \frac{\partial F}{\partial q_{i,\ell}}(Q) G_{i,\ell}(Q)<0.
\end{equation}
\end{definition}

Lyapunov games include ordinal potential (and hence (exact) potential) games
: we will say that a Lyapunov function $F: K \to \R$ is multiaffine, if it is defined as as polynomial in all its variables, it is of degree $1$ in each variable, and none of its monomials are of the form $q_{i,\ell}q_{i,\ell'}$.

\begin{theorem} \label{th:ordinal} An ordinal potential game is a Lyapunov game with respect to dynamics \eqref{eq:replicatorres}. Furthermore, its has some multiaffine  Lyapunov function.
\end{theorem}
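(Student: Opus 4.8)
The plan is to exhibit the multilinear (hence multiaffine) extension of the ordinal potential as the Lyapunov function, and to reduce the defining inequality \eqref{eq:lyapunov} to a statement about a correlation that one hopes to control, profile by profile, through the ordinal property. Concretely, let $\phi$ be a real-valued ordinal potential, shifted by a constant so that $\phi\ge 0$ on the finitely many pure profiles, and set
\[
F(Q)=\sum_{s_1,\dots,s_n}\phi(s_1,\dots,s_n)\prod_{j=1}^n q_{j,s_j}=\E{\phi},
\]
the expectation of $\phi$ when each player $j$ draws $s_j$ according to $q_j$. This $F$ is of degree one in every $q_{i,\ell}$ and contains no monomial $q_{i,\ell}q_{i,\ell'}$, since each player contributes exactly one factor $q_{j,s_j}$; hence it is multiaffine in the required sense, and $\partial F/\partial q_{i,\ell}(Q)=\phi(e_\ell,Q_{-i})$ is the multilinear extension of $\phi$ with player $i$ pinned to pure strategy $\ell$.

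I would then substitute this into \eqref{eq:lyapunov}. Writing $a_\ell=\partial F/\partial q_{i,\ell}(Q)$ and $u_\ell=\uu_i(e_\ell,Q_{-i})$ (the expected value of $\gamma(r_i)$ when $i$ plays $\ell$ against $Q_{-i}$), and using $G_{i,\ell}(Q)=q_{i,\ell}(u_\ell-\bar u)$ with $\bar u=\uu_i(q_i,Q_{-i})=\sum_\ell q_{i,\ell}u_\ell$ (as in the derivation of \eqref{eq:replicatorres}), the inner sum over $\ell$ collapses, because $\sum_\ell G_{i,\ell}=0$, into a covariance:
\[
\sum_\ell a_\ell\,G_{i,\ell}(Q)=\sum_\ell q_{i,\ell}a_\ell u_\ell-\Big(\sum_\ell q_{i,\ell}a_\ell\Big)\Big(\sum_\ell q_{i,\ell}u_\ell\Big)=\mathrm{Cov}_{q_i}(a,u).
\]
Thus the left-hand side of \eqref{eq:lyapunov} equals $\sum_i p_i\,\mathrm{Cov}_{q_i}(a^{(i)},u^{(i)})$, and it suffices to show each covariance is nonpositive, with at least one strictly negative whenever $G(Q)\neq0$. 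The intended engine for the sign is Chebyshev's sum inequality: since $\gamma$ is decreasing, $u_\ell$ is antitone in the cost for each fixed pure opponent profile, while the ordinal property forces $\phi(e_\ell,\cdot)$ to be comonotone with that cost; so for each fixed pure $s_{-i}$ the sequences $(\phi(e_\ell,s_{-i}))_\ell$ and $(\gamma(c_i(e_\ell,s_{-i})))_\ell$ are anti-comonotone and have nonpositive $q_i$-covariance.

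The main obstacle is exactly the passage from ``per pure opponent profile'' to the extended quantities $a_\ell$ and $u_\ell$, which are \emph{independent} $Q_{-i}$-averages of those pure values. Anti-comonotonicity of two sequences is not preserved under such averaging, so $\mathrm{Cov}_{q_i}(a,u)\le0$ does not follow from the per-profile statement in general; indeed the ranking of player $i$'s strategies can invert from one opponent profile to another, and the off-diagonal cross terms (different $s_{-i}$ in $a$ and in $u$) are uncontrolled. I would therefore first settle the case where this difficulty dissolves, namely an affine $\gamma$, e.g. $\gamma(x)=(M-x)/M$: then $u_\ell=1-c_i(e_\ell,Q_{-i})/M$ is an exact affine function of the extended expected cost, and for an \emph{exact} potential the difference $c_i(e_\ell,Q_{-i})-c_i(e_{\ell'},Q_{-i})$ equals $\phi(e_\ell,Q_{-i})-\phi(e_{\ell'},Q_{-i})$ even for mixed $Q_{-i}$, by multilinearity. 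This yields $u_\ell=C(Q_{-i})-a_\ell/M$ with slope independent of $\ell$, whence $\mathrm{Cov}_{q_i}(a,u)=-\tfrac1M\mathrm{Var}_{q_i}(a)\le0$.

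The affine-exact case also disposes of strictness and non-negativity simultaneously: the left-hand side of \eqref{eq:lyapunov} becomes $-\tfrac1M\sum_i p_i\,\mathrm{Var}_{q_i}(a^{(i)})$, which vanishes precisely when every $u^{(i)}$ is constant on the support of $q_i$, i.e. exactly when $G(Q)=0$, so it is strictly negative whenever $G(Q)\neq0$; and $F\ge0$ by the initial shift. For a genuine ordinal potential with nonaffine $\gamma$, the comonotonicity-under-averaging gap is where the argument must be reinforced, either by restricting to affine $\gamma$ (under which, as the text notes, the replicator dynamics \eqref{eq:replicatorres} changes only by a rescaling) or by selecting the real-valued representative of the ordinal potential so that its extension stays comonotone with $\uu_i$ after averaging. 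I expect this last point to be the delicate heart of the proof.
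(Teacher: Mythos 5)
Your construction is exactly the paper's: the candidate Lyapunov function is the multilinear extension $F(Q)=\E{\phi(Q)}$, its partial derivatives are $\frac{\partial F}{\partial q_{i,\ell}}(Q)=F(e_\ell,Q_{-i})$, and the left-hand side of \eqref{eq:lyapunov} is rewritten, player by player, as the symmetrized sum
\[
-\tfrac12\sum_i p_i\sum_{\ell<\ell'}q_{i,\ell}q_{i,\ell'}\bigl(F(e_\ell,Q_{-i})-F(e_{\ell'},Q_{-i})\bigr)\bigl(\uu_i(e_\ell,Q_{-i})-\uu_i(e_{\ell'},Q_{-i})\bigr),
\]
which is precisely your covariance identity. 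Your complete treatment of the exact-potential, affine-$\gamma$ case (pairwise differences of $F$ and of expected costs coincide by multilinearity, so each player's covariance is $-\frac1M\mathrm{Var}_{q_i}$, with strictness exactly off stationary points) is correct, and it is in substance the paper's Proposition \ref{prop:lyapunovpot} on continuous potential games.

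The obstacle you single out is not a weakness of your attempt relative to the paper: it is a hole in the paper's own proof. The paper closes the argument by asserting that, ``since the game is ordinal'', each product above is nonnegative ``by definition''; but ordinality gives sign agreement of cost and potential differences only at \emph{pure} opponent profiles, whereas in \eqref{eq:lyapunov} the profile $Q_{-i}$ is mixed, and, as you observe, sign agreement does not survive independent averaging. Concretely, take two players with two strategies each and costs $c_1(1,A)=0$, $c_1(2,A)=1$, $c_1(1,B)=1$, $c_1(2,B)=0.1$ for player $1$, and $c_2(1,A)=0$, $c_2(1,B)=1$, $c_2(2,A)=1$, $c_2(2,B)=0$ for player $2$; the function $\phi(1,A)=0$, $\phi(2,A)=1$, $\phi(1,B)=100$, $\phi(2,B)=0$ is a valid ordinal potential (all four unilateral deviations change $c_i$ and $\phi$ with the same sign), while no exact potential exists (the four-cycle sum is $1-1+0.9-1=-0.1\neq0$). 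When both players mix uniformly, player $1$'s extended potential difference is $50-0.5>0$ while her extended cost difference is $0.5-0.55<0$, and player $2$'s extended cost difference is $0$; hence, with $\gamma$ affine, the left-hand side of \eqref{eq:lyapunov} is strictly \emph{positive} at a point where $G(Q)\neq0$, so $\E{\phi}$ is not a Lyapunov function for this ordinal potential game. Thus the paper really establishes Theorem \ref{th:ordinal} only for exact (and, by the same computation, weighted) potentials --- the case you proved; for genuinely ordinal potentials the step fails, and repairing it would require either a judicious choice of the representative $\phi$ (the game above happens to admit a weighted potential, whose expectation does work) or a Lyapunov function that is not the expectation of the given potential, neither of which the paper supplies. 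Your identification of this averaging step as the delicate heart of the matter is exactly right.
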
 

\begin{proof}
Consider $F(Q)=\Ec{\phi(Q)}{\mbox{ players play pure strategies according to probability distribution $Q$}}$ where $\phi$ is the potential of the ordinal potential game.  By linearity of expectation, $F(Q)$ is clearly multiaffine.

Now, by linearity of expectation, we have that  $F(q_i,Q_{-i})= \sum_{\ell} q_{i,\ell} F(e_{\ell},Q_{-i})$, and hence
$\frac{\partial F}{\partial q_{i,\ell}}(Q)= F(e_{\ell},Q_{-i})$. Now, for dynamics \eqref{eq:replicatorres}, left-hand side of \eqref{eq:lyapunov} rewrites to 
$$
\equations{
\sum_{i,\ell} p_i \frac{\partial F}{\partial q_{i,\ell}}(Q) G_{i,\ell}(Q) &=&   - 
\sum_{i} p_i\sum_{\ell}  F(e_{\ell},Q_{-i}) q_{i,\ell} ( \uu_i(e_{\ell},Q_{-i}) - \uu_i(q_i,Q_{-i}))\\
& =& - 
\sum_{i} p_i \sum_{\ell} \sum_{\ell'} q_{i,\ell} q_{i,\ell'} F(e_{\ell},Q_{-i}) ( \uu_i(e_{\ell},Q_{-i}) - \uu_i(e_{\ell'},Q_{-i})) \\
&=& - \frac12 
\sum_{i} p_i\sum_{\ell<\ell'} q_{i,\ell} q_{i,\ell'} (F(e_{\ell},Q_{-i})-F(e_{\ell'},Q_{-i})) ( \uu_i(e_{\ell},Q_{-i}) - \uu_i(e_{\ell'},Q_{-i})) \\
}
$$
Since the game is ordinal, $(F(e_{\ell},Q_{-i})-F(e_{\ell'},Q_{-i})) ( \uu_i(e_{\ell},Q_{-i}) - \uu_i(e_{\ell'},Q_{-i}))$ is always non-negative, by definition, and hence $F$ is a Lyapunov function.
\end{proof}

More precisely, if $\phi$ is the potential of the ordinal potential game, then one can take its expectation  $F(Q)=\E{\phi(Q)}=\Ec{\phi(Q)}{\mbox{ players play pure strategies according to } Q}$
as a Lyapunov function with respect to dynamics \eqref{eq:replicatorres}.

The following class of games have been introduced

\begin{definition}[Potential Game \cite{sandholm2001pgc}]
A game is called a \emph{continuous potential game} if there exists a $\mC^1$ function $F: K \to \R$ such that
for all $i,\ell$ and $Q$, 
\begin{equation} \label{eq:potentiel}
\frac{\partial F}{\partial q_{i,\ell}}(Q) = \uu_i(e_\ell,Q).
\end{equation}
\end{definition}

\begin{proposition} \label{prop:lyapunovpot}
A continuous potential game is a Lyapunov game with respect to dynamics \eqref{eq:replicatorres}. Furthermore, its has some multiaffine  Lyapunov function.
\end{proposition}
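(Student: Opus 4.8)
The plan is to follow the same template as the proof of Theorem~\ref{th:ordinal}, establishing the two assertions separately: that the prescribed potential $F$ yields a Lyapunov function for \eqref{eq:replicatorres}, and that $F$ may be taken multiaffine. Throughout I would use that utilities are multilinear in mixed strategies, so that $\uu_i(q_i,Q_{-i})=\sum_{\ell'} q_{i,\ell'}\,\uu_i(e_{\ell'},Q_{-i})$, and that for \eqref{eq:replicatorres} one has $G_{i,\ell}(Q)=-q_{i,\ell}\bigl(\uu_i(e_\ell,Q_{-i})-\uu_i(q_i,Q_{-i})\bigr)$. Since $F$ is continuous on the compact set $K$, I may add a constant to make it non-negative without changing its partial derivatives, so the non-negativity requirement in the definition of a Lyapunov game comes for free.

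For the Lyapunov inequality, I would substitute the defining relation $\frac{\partial F}{\partial q_{i,\ell}}(Q)=\uu_i(e_\ell,Q_{-i})$ of \eqref{eq:potentiel} together with the above expression for $G_{i,\ell}$ into the left-hand side of \eqref{eq:lyapunov}, then expand $\uu_i(q_i,Q_{-i})$ and symmetrize the double sum over $\ell,\ell'$ exactly as in Theorem~\ref{th:ordinal}. The outcome is
\begin{equation*}
\sum_{i,\ell} p_i \frac{\partial F}{\partial q_{i,\ell}}(Q)\,G_{i,\ell}(Q)
= -\tfrac12 \sum_i p_i \sum_{\ell,\ell'} q_{i,\ell} q_{i,\ell'}\bigl(\uu_i(e_\ell,Q_{-i})-\uu_i(e_{\ell'},Q_{-i})\bigr)^2 .
\end{equation*}
The notable simplification over the ordinal case is that here the potential derivative is $\uu_i(e_\ell,Q_{-i})$ itself, so the symmetrization produces a genuine square; non-positivity is then automatic and I do not even need an ordinal hypothesis.

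It remains to upgrade this to strict negativity whenever $G(Q)\neq0$. I would observe that, as the $p_i$ are positive and each term is non-negative, the right-hand side vanishes exactly when, for every player $i$ and every pair $\ell,\ell'$ with $q_{i,\ell},q_{i,\ell'}>0$, the payoffs $\uu_i(e_\ell,Q_{-i})$ and $\uu_i(e_{\ell'},Q_{-i})$ agree; equivalently, every pure strategy in the support of $q_i$ performs as well as $q_i$ itself. This is precisely the condition $G_{i,\ell}(Q)=0$ for all $i,\ell$, i.e. $G(Q)=0$. Contrapositively, $G(Q)\neq0$ forces the sum to be strictly negative, which is \eqref{eq:lyapunov}.

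The part I expect to require the most care is showing $F$ can be chosen multiaffine. The key observation is that $\frac{\partial F}{\partial q_{i,\ell}}(Q)=\uu_i(e_\ell,Q_{-i})$ does not depend on the block $q_i$ and is multilinear in the remaining blocks $Q_{-i}$. A $\mC^1$ function whose gradient in the variables $q_{i,\cdot}$ is independent of those variables is affine in the block $q_i$; carrying this out for every $i$ shows $F$ is affine in each player's block, hence has no monomial containing two variables of the same player. Being affine in each block, $F$ is determined by its values at the corners of $K$ and coincides with their multilinear extension, that is with $\E{\phi(Q)}$ where $\phi$ is the restriction of $F$ to pure profiles; by linearity of expectation this extension is multiaffine in the sense of the paper. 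The one subtlety to handle cleanly is the meaning of the partials $\frac{\partial F}{\partial q_{i,\ell}}$ under the constraint $\sum_\ell q_{i,\ell}=1$: I would read them, as the definition of continuous potential game already does, as honest partials of an extension of $F$ to the affine hull, so that the implication \emph{gradient independent of $q_i$ implies affine in $q_i$} applies verbatim.
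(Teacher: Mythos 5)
Your proof follows essentially the same route as the paper's: substitute $\frac{\partial F}{\partial q_{i,\ell}}(Q)=\uu_i(e_\ell,Q_{-i})$ and $G_{i,\ell}(Q)=-q_{i,\ell}(\uu_i(e_\ell,Q_{-i})-\uu_i(q_i,Q_{-i}))$ into \eqref{eq:lyapunov} and symmetrize to obtain minus a sum of squares, which is strictly negative exactly off the stationary points. The only difference is that you spell out what the paper merely asserts in one line --- that the gradient condition forces $F$ to be affine in each player's block and hence multiaffine --- together with the harmless normalizations (shifting $F$ by a constant, reading the partials on the affine hull), so the argument is correct and, if anything, more complete than the original.
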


\begin{proof}

By definition, $F$ has a multiaffine Lyapunov function: this is clear as all its partial derivative are known, given by $\frac{\partial F}{\partial q_{i,\ell}}(Q) = \u_i(e_\ell,Q).$

Now, in this case, for dynamics \eqref{eq:replicatorres}, left-hand side of \eqref{eq:lyapunov} rewrites to 
$$
\equations{
\sum_{i,\ell} p_i \frac{\partial F}{\partial q_{i,\ell}}(Q) G_{i,\ell}(Q) &=&   - 
\sum_{i} p_i \sum_{\ell}  \uu_i(e_{\ell},Q_{-i}) q_{i,\ell} ( \uu_i(e_{\ell},Q_{-i}) - \uu_i(q_i,Q_{-i}))\\
& =& - 
\sum_{i} p_i\sum_{\ell} \sum_{\ell'} q_{i,\ell} q_{i,\ell'} \uu_i(e_{\ell},Q_{-i}) ( \uu_i(e_{\ell},Q_{-i}) - \uu_i(e_{\ell'},Q_{-i})) \\
&=& - \frac12 
\sum_{i} p_i \sum_{\ell<\ell'} q_{i,\ell} q_{i,\ell'} (\uu_i(e_{\ell},Q_{-i})-\uu_i(e_{\ell'},Q_{-i}))^2 
\\
}
$$
hence is positive on non-stationary points.
\end{proof}

Recall that exact potential games have been defined page \pageref{def:potential}, following \cite{MondererShapley1996}, in terms of pure strategies. Notions turn out to be equivalent
 when $F$ is assumed at least $\mC^2$.

\begin{proposition} \label{prop:potentialpotential}
An (exact) potential game 
of potential $\phi$ leads to a continuous potential game with $F(Q)=\E{\phi(Q)}$, and conversely, the restriction of $F$ of class $\mathcal{C}^2$  to pure strategies of a potential in the sense of above definition leads to an (exact) potential.
\end{proposition}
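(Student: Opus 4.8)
The plan is to prove the two implications separately, exploiting in both the multiaffine structure already established in the proof of Theorem \ref{th:ordinal}.

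For the forward implication, I would set $F(Q)=\E{\phi(Q)}$, the multilinear extension of the pure-strategy potential $\phi$. Exactly as in Theorem \ref{th:ordinal}, linearity of expectation yields $F(q_i,Q_{-i})=\sum_\ell q_{i,\ell}F(e_\ell,Q_{-i})$, whence $\frac{\partial F}{\partial q_{i,\ell}}(Q)=F(e_\ell,Q_{-i})$, the expected value of $\phi$ when player $i$ plays pure strategy $\ell$ and the others draw their pure strategies from $Q_{-i}$. Reading the exact-potential identity at a pure opponent profile $S_{-i}$ gives $\phi(e_\ell,S_{-i})-\phi(e_{\ell'},S_{-i})=\u_i(e_\ell,S_{-i})-\u_i(e_{\ell'},S_{-i})$, so $\phi(e_\ell,S_{-i})$ and $\u_i(e_\ell,S_{-i})$ differ by a quantity independent of $\ell$. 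Taking expectations over $S_{-i}$ and recalling that $\uu_i(e_\ell,Q)$ is the multilinear extension of the cost $\u_i$, I obtain $\frac{\partial F}{\partial q_{i,\ell}}(Q)=\uu_i(e_\ell,Q)+H_i(Q_{-i})$, where $H_i(Q_{-i})$ does not depend on $\ell$; this is Equation \eqref{eq:potentiel} up to an $\ell$-independent term.

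For the converse, I would define $\phi$ as the restriction of the $\mathcal{C}^2$ potential $F$ to the corners of $K$ and verify the exact-potential condition by integrating along simplex edges. Fixing a pure profile $S_{-i}$ and parametrizing $q_i(\tau)=(1-\tau)e_{\ell'}+\tau e_\ell$ in $K_i$, the fundamental theorem of calculus gives $\phi(e_\ell,S_{-i})-\phi(e_{\ell'},S_{-i})=\int_0^1\left(\frac{\partial F}{\partial q_{i,\ell}}-\frac{\partial F}{\partial q_{i,\ell'}}\right)(q_i(\tau),S_{-i})\,d\tau$. By \eqref{eq:potentiel} the integrand equals $\uu_i(e_\ell,(q_i(\tau),S_{-i}))-\uu_i(e_{\ell'},(q_i(\tau),S_{-i}))$; since the expected cost of a pure strategy depends only on the opponents' profile, which is the fixed pure profile $S_{-i}$ along the edge, the integrand is constant in $\tau$ and equals $\u_i(e_\ell,S_{-i})-\u_i(e_{\ell'},S_{-i})$. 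This is precisely the Monderer--Shapley condition. The role of the $\mathcal{C}^2$ hypothesis here is to guarantee symmetry of the mixed second derivatives of $F$, i.e. the integrability condition making the field $(i,\ell)\mapsto\uu_i(e_\ell,Q)$ a genuine gradient, so that the construction is consistent.

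The main obstacle, and the point deserving care, is the offset $H_i(Q_{-i})$ in the forward direction: \eqref{eq:potentiel} cannot hold on the nose. The resolution I would emphasize is that $F$ lives on the product of simplices $\sum_\ell q_{i,\ell}=1$, so each partial derivative $\frac{\partial F}{\partial q_{i,\ell}}$ is determined only up to a function of $Q_{-i}$ independent of $\ell$: adding $\lambda_i(Q_{-i})\sum_\ell q_{i,\ell}$ to any off-simplex extension of $F$ shifts every $\frac{\partial F}{\partial q_{i,\ell}}$ by $\lambda_i(Q_{-i})$ while leaving $F$ unchanged on $K$. Consequently the offset is immaterial: it cancels in the replicator dynamics \eqref{eq:replicatorres} and in the Lyapunov computation of Proposition \ref{prop:lyapunovpot}, both of which use only the differences $\uu_i(e_\ell,Q)-\uu_i(e_{\ell'},Q)$. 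I would therefore state \eqref{eq:potentiel} as holding modulo this gauge, which makes the two constructions mutually inverse up to an additive constant in the potential, exactly as in the classical Monderer--Shapley setting.
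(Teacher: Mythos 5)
Your proof is correct in substance but takes a genuinely different route from the paper's, and in one place it is more careful than the paper itself. For the converse direction, the paper goes through the equivalence (under the $\mathcal{C}^2$ hypothesis) of \eqref{eq:potentiel} with externality symmetry \eqref{eq:ES}, then invokes path-independence of the line integral \eqref{eq:integrale} and the characterization of exact potential games in \cite{MondererShapley1996}; you instead integrate $\frac{\partial F}{\partial q_{i,\ell}}-\frac{\partial F}{\partial q_{i,\ell'}}$ directly along the edge $(1-\tau)e_{\ell'}+\tau e_\ell$ at a fixed pure opponent profile, using the fact that $\uu_i(e_\ell,\cdot)$ does not depend on $q_i$ to evaluate the integrand as a constant. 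This is more elementary, bypasses path-independence and the cited characterization, and in fact needs only $F\in\mathcal{C}^1$ --- so your closing remark about the role of the $\mathcal{C}^2$ hypothesis is off: that hypothesis is what powers the paper's route via \eqref{eq:ES}, whereas your argument shows it can be weakened. For the forward direction, the paper merely asserts the claim is ``easy to establish, in the same vein as'' Theorem \ref{th:ordinal}, which implicitly requires $F(e_\ell,Q_{-i})=\uu_i(e_\ell,Q)$; you correctly observe that the exact-potential condition is a condition on differences, so it only yields $\frac{\partial F}{\partial q_{i,\ell}}(Q)=\uu_i(e_\ell,Q)+H_i(Q_{-i})$ with an $\ell$-independent offset --- a genuine subtlety that the paper's one-line proof glosses over, and whose resolution via the gauge freedom in the off-simplex extension is the right idea. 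One slip in that resolution: adding $\lambda_i(Q_{-i})\sum_\ell q_{i,\ell}$ does change $F$ on $K$, since there $\sum_\ell q_{i,\ell}=1$ and the addition shifts $F$ by $\lambda_i(Q_{-i})$; the term you want is $\lambda_i(Q_{-i})\left(\sum_\ell q_{i,\ell}-1\right)$, which vanishes identically on $K$ and, on $K$, shifts $\frac{\partial F}{\partial q_{i,\ell}}$ by $\lambda_i(Q_{-i})$ while leaving all other players' partial derivatives unchanged. Taking $\lambda_i=-H_i$ then makes \eqref{eq:potentiel} hold exactly on $K$, and your observation that the offset in any case cancels wherever it is used --- in \eqref{eq:replicatorres} and in the computation of Proposition \ref{prop:lyapunovpot}, which involve only differences $\uu_i(e_\ell,Q_{-i})-\uu_i(e_{\ell'},Q_{-i})$ --- remains valid.
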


\begin{proof}


In other words, a game is a continuous potential game if there exists some $\mC^1$ function whose gradient $\nabla f$ equals the cost vector $H=(\uu_i(e_l,Q))_{i,l}$. Function $F$, which is unique up to an additive constant, is called the potential function of the game.

When $F$ is $\mC^2$, condition \eqref{eq:potentiel} is equivalent to \emph{externality symmetry} \cite{sandholm2001pgc,MondererShapley1996}: 
\begin{equation} \label{eq:ES}
\frac{\partial \uu_{i}(e_\ell,Q)}{\partial q_{j,\ell'}} = \frac{\partial \uu_{j}(e_\ell',Q)}{\partial q_{i,\ell}},
\end{equation}
for all $i,j,\ell,\ell'$.
In that case, by a well-known result (characterization of exact forms), if we fix any $z \in K$, $F$ is given by 
\begin{equation} \label{eq:integrale}
F(Q) = \sum_{i=1}^{n} \sum_{\ell=1}^{m_i}\int_0^1 \uu_i(e_\ell,x(t)) x'_i(t) dt,
\end{equation}
where $x:[0,1] \to K$ is any piecewise continuous differentiable path in $K$ that connects $z$ to $Q$ (i.e. $x(0)=z$, $x(1)=Q$). 

In particular it must be independent of the used path. Considering paths from pure strategies to pure strategies, the second part of the proposition follows, from characterizations of (exact) potential games  in \cite{MondererShapley1996}.
The first part of the proposition is easy to establish, in the same vein as we established $\frac{\partial F}{\partial q_{i,\ell}}(Q)= F(e_{\ell},Q_{-i})$  in the proof of Theorem  \ref{th:ordinal} above. 
\end{proof}

A Lyapunov game can have some non-multiaffine potential function, hence not all Lyapunov games with respect to dynamics \eqref{eq:replicatorres} are ordinal games. We believe Lyapunov game with respect to dynamics \eqref{eq:replicatorres} with a multiaffine potential function to differ from ordinal games.

The interest of Lyapunov functions is that they provide convergence.  Recall that the $\omega(Q_0)$ limit set of a point $Q_0$ is the set of accumulation points of the trajectories that start from $Q_0$: considering a  trajectory starting from $Q_0$, this is the set of $Q^*$ with $Q^*=\lim_{n \to \infty} Q(t_n)$, for some increasing sequence $(t_n)_{n \ge 0} \in \R$
.

\begin{proposition}\label{prop:lyapunov}
In any Lyapunov game with respect to any dynamic \eqref{eqgene} over $K$,  the solutions of mean-field approximation \eqref{eqgene} 
have   their limit set $\omega(Q)$ non-empty, compact, connected, and consisting entirely of stationary points of the dynamic. On this limit sets, $F$ is constant.
\end{proposition}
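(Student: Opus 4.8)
The plan is to recognize this statement as an instance of the LaSalle invariance principle, and to split it into (a) the topological properties of $\omega$-limit sets, which follow from compactness alone, and (b) the dynamical content, which comes from the strict decrease of $F$ off stationary points. First I would record the standing facts: since $K$ is compact and kept invariant by \eqref{eqgene}, and the vector field is continuous with unique solutions (Cauchy--Lipschitz, as already invoked for Theorem~\ref{propindiens}), every trajectory $Q(t)$ is defined for all $t \ge 0$ and stays in $K$. Writing $\omega(Q) = \bigcap_{T \ge 0} \overline{\{Q(t) : t \ge T\}}$, each set $\overline{\{Q(t): t \ge T\}}$ is non-empty, compact (a closed subset of the compact $K$), and connected (the trajectory is the continuous image of an interval, and the closure of a connected set is connected). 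A nested intersection of non-empty compact connected sets is again non-empty, compact, and connected, which yields the first three assertions. I would also invoke the standard fact that $\omega(Q)$ is invariant under the flow, since this is the one ingredient driving the rest of the argument.

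Next I would exploit the monotonicity of $F$. Along any trajectory, $\frac{d}{dt} F(Q(t)) = \sum_{i,\ell} p_i \frac{\partial F}{\partial q_{i,\ell}}(Q(t))\, G_{i,\ell}(Q(t))$, which is strictly negative when $G(Q(t)) \neq 0$ by the Lyapunov condition \eqref{eq:lyapunov}, and equals $0$ at stationary points (where $G=0$); hence $t \mapsto F(Q(t))$ is non-increasing. Being bounded below by $0$, it converges to some limit $c$. For any $Q^* \in \omega(Q)$, picking $t_n \to \infty$ with $Q(t_n) \to Q^*$ and using continuity of $F$ gives $F(Q^*) = \lim_n F(Q(t_n)) = c$, so $F \equiv c$ is constant on $\omega(Q)$.

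Finally, to show that $\omega(Q)$ consists only of stationary points, I would argue by contradiction using invariance. Suppose some $Q^* \in \omega(Q)$ satisfies $G(Q^*) \neq 0$, and let $\tilde{Q}(t)$ be the solution with $\tilde{Q}(0) = Q^*$. By invariance of $\omega(Q)$ we have $\tilde{Q}(t) \in \omega(Q)$ for all $t$, so $F(\tilde{Q}(t)) = c$ is constant and its $t$-derivative at $0$ vanishes. But that derivative equals $\sum_{i,\ell} p_i \frac{\partial F}{\partial q_{i,\ell}}(Q^*)\, G_{i,\ell}(Q^*)$, which is strictly negative by \eqref{eq:lyapunov} since $G(Q^*) \neq 0$, a contradiction. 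Hence every point of $\omega(Q)$ is a stationary point, completing the proof.

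I expect the main obstacle to be the invariance of the $\omega$-limit set under the flow: it is the only non-elementary ingredient, and it is exactly what allows the pointwise strict-decrease property \eqref{eq:lyapunov} to be transported onto the limit set and confronted with the constancy of $F$ there. The topological properties and the convergence of $F$ along trajectories are routine consequences of compactness of $K$ and monotonicity of $F$.
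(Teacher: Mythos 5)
Your proof is correct, and its skeleton (non-emptiness, compactness and connectedness from the nested-intersection description of $\omega(Q)$; constancy of $F$ from monotonicity plus continuity) matches the paper's. Where you genuinely diverge is the key dynamical step showing that limit points are stationary. You invoke invariance of the $\omega$-limit set under the flow as a black box, and then confront the constancy of $F$ on $\omega(Q)$ with the strict decrease \eqref{eq:lyapunov} at a hypothetical non-stationary limit point --- the textbook LaSalle invariance principle. The paper never uses invariance of the limit set: adapting the Lyapunov-stability proof of \cite{HSD03}, it argues via continuous dependence on initial conditions. Namely, if $G(Q_0) \neq 0$ for a limit point $Q_0$, the solution $Z$ started at $Q_0$ satisfies $F(Z(s)) < F(Q_0)$ for $s>0$; by continuity of the flow, any solution started sufficiently close to $Q_0$ --- in particular the original trajectory restarted at $Q(t_n)$ for $n$ large --- also drops strictly below $F(Q_0)$ after time $s$, contradicting the fact that $F(Q(t)) \ge \inf_t F(Q(t)) = F(Q_0)$ along the whole trajectory. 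The trade-off: your route is shorter once invariance is granted, but invariance is exactly the non-elementary ingredient (itself proved by continuous dependence) that you must import from outside; the paper's route is self-contained modulo continuity of the flow, which it already has from the Cauchy--Lipschitz argument used for Theorem~\ref{propindiens}, at the cost of a slightly more delicate neighborhood argument. Both proofs hinge on the same inequality $F(Q(t)) \ge F(Q_0)$, obtained from monotone convergence of $F$ along the trajectory.
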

\begin{proof}
This is made of well-known fact, and is for example present for example as Lemma A.1 of \cite{sandholm2001pgc}.

For self-contentedness, here is mainly a slight adaptation of the proof of Lyapunov Stability theorem \cite[page 194]{HSD03}.

$F(Q(t))$ must be monotone along trajectories, since Equation \eqref{eq:lyapunov} guarantees $\frac{dF(Q(t))}{dt}<0$. Let $Q(t)$ be some solution of ordinary differential equation \eqref{eqgene} with $Q(0)=x$. Let $Q_0 \in \omega(x)$: that is to say $Q(t_n) \to Q_0$ for some sequence $t_n \to \infty$. We claim that $Q_0$ must be some stationary point of the dynamics, that is to say, $G(Q_0)=0$. To see this, observe that $F(Q(t)) > F(Q_0)$ since $F(Q(t))$ decreases and $F(Q(t))$ converges to $F(Q_0)$ by continuity of $F$. 

Suppose that $G(Q_0) \neq 0$. Let $Z(t)$ be the solution of the ordinary differential equation starting from $Q_0$. For any $s>0$, we have $F(Z(s)) <F(Q_0)$. Hence, for any solution $Y(s)$ starting sufficiently near $Z_0$ we have $F(Y(s)) < F(Q_0)$. Setting $Y(0)=Q(t_n)$ for sufficiently large $n$ yields the contradiction $F(Q(t_n+s)) < F(Q_0)$. Therefore, $G(Q_0)=0$. 

This proves that any limit set must be non-empty and consisting entirely of stationary point of the dynamics.

By continuity of $F$, $F(Q_0)= \lim_{n \to \infty} F(Q(t_n))$ for any limit point $Q_0$. Now this must be equal to $\inf_t(F(Q(t)))$ and hence independent of $Q_0$.

The subset $\omega(x)$ of limit points $Q_0$, being equal to $\cap_{t} Closure(F(s \ge t))$, hence a decreasing intersection of compact connected sets must be compact and connected.

\end{proof}

Observing that all previous classes are Lyapunov games with respect to dynamics \eqref{eq:replicatorres}, this gives the full interest of this corollary.

\begin{corollary}
In a Lyapunov game with respect to general dynamics \eqref{eq:replicatorres}, whatever the initial condition is, the solutions of mean-field approximation \eqref{eqgene} will converge. The stable limit points are Nash equilibria.
\end{corollary}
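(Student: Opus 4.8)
The plan is to assemble the corollary directly from the two pillars already in place: Proposition~\ref{prop:lyapunov}, which describes the $\omega$-limit set of any trajectory of the mean-field approximation, and Proposition~\ref{prop:deux}, which identifies the stable stationary points of the replicator dynamics \eqref{eq:replicatorres} as exactly the Nash equilibria. Since we are specializing to \eqref{eq:replicatorres} together with a game that is Lyapunov with respect to this dynamic, both hypotheses are available at once, and the proof is essentially a matter of chaining them.

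First I would record the monotonicity engine. Because $F$ is a non-negative $\mC^1$ Lyapunov function, the composite $F(Q(t))$ is non-increasing along any solution of \eqref{eqgene} (its time derivative is the left-hand side of \eqref{eq:lyapunov}, strictly negative off the stationary set) and bounded below by $0$; hence $F(Q(t))$ converges to $F_\infty = \inf_t F(Q(t))$. Invoking Proposition~\ref{prop:lyapunov} verbatim then gives that the limit set $\omega(Q(0))$ is non-empty (compactness of $K$ prevents escape), compact, connected, consists entirely of stationary points of \eqref{eq:replicatorres}, and satisfies $F \equiv F_\infty$ on it. This is precisely the convergence asserted in the first sentence: the distance from $Q(t)$ to this connected set of stationary points tends to $0$.

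For the second sentence I would apply Proposition~\ref{prop:deux}~(ii): every stable stationary point of \eqref{eq:replicatorres} is a Nash equilibrium. Consequently any stable limit point to which a trajectory settles is a Nash equilibrium, and the instability of non-Nash stationary points (the exponential-escape argument in the discussion following Proposition~\ref{prop:deux}) is what forces the stable limit points to coincide with Nash equilibria. I expect the main obstacle to be purely a matter of honest phrasing rather than of hard analysis: a connected component of the stationary set can be a continuum (for instance a face of $K$ on which all used strategies perform equally well), so without an isolatedness or genericity assumption the strongest faithful claim is convergence to the $\omega$-limit set, not to a single point. I would therefore state the conclusion in this set-convergence form, reserving the single-point, Nash-equilibrium conclusion for the case where the reached stable stationary point is isolated, where Proposition~\ref{prop:deux}~(ii) applies cleanly.
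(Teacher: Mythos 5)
Your proof is correct and follows exactly the route the paper intends: the convergence statement is Proposition~\ref{prop:lyapunov} applied to the Lyapunov function (monotone, bounded below, hence $\omega$-limit set non-empty, compact, connected, and made of stationary points), and the identification of stable limit points as Nash equilibria is Proposition~\ref{prop:deux}~(ii). Your added caveat that convergence should be read as convergence to the $\omega$-limit set (possibly a continuum) rather than to a single point, and that non-Nash unstable stationary points may persist absent the (NC) property, matches the paper's own qualifying remark immediately after the corollary.
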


If mean-field approximation \eqref{eqgene} has the (NC) property, then this guarantees that limit points are Nash equilibria. Otherwise, unstable limit stationary may exist for the mean-field approximation. 

\section{Replicator-Like Dynamics for Multiaffine Lyapunov Games}

Fortunately, this is possible to go further, observing that many of the previous classes (ordinal, (exact) potential, continuous potential, load balancing games, congestion games, task allocation games)  turn out by previous discussion to have a multiaffine Lyapunov function.

When this holds, this is indeed possible to talk directly about the stochastic algorithms, avoiding 
passage through ordinary differential equation \eqref{eqgene}, and the double limit $b \to 0$, $t \to \infty$. 
The key observation is the following (the proof mainly relies on the fact that second order terms are null for multiaffine functions)
.



\begin{lemma}\label{prop:keylemma}
 When $F$ is a multiaffine Lyapunov function, 
\begin{equation}  \label{eq:demi}
\Ec{\Delta F(Q(t+1))}{Q(t)} =\sum_{i=1}^n \sum_{\ell=1}^{m_i} \frac{\partial F}{\partial q_{i,\ell}} (Q(t)) \Ec{\Delta q_{i,\ell}}{Q(t)},
\end{equation}
where $\Delta F(t)= F(Q(t+1))-F(Q(t))$.
\end{lemma}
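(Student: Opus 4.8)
The plan is to establish the identity \eqref{eq:demi} first \emph{pathwise} — for every realization of round $t$, conditioned on $Q(t)$ — and only at the very end take conditional expectations, at which point linearity does all the work since the coefficients $\frac{\partial F}{\partial q_{i,\ell}}(Q(t))$ are determined by $Q(t)$. So the content is really a deterministic algebraic identity about $F$, and the expectation is cosmetic.

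First I would record the two structural facts that force the first-order (Taylor) term to be exact. Because $F$ is multiaffine it is of degree $1$ in each $q_{i,\ell}$ and has no monomial of the form $q_{i,\ell}q_{i,\ell'}$; hence every monomial contains at most one variable of each player, and all within-player second partial derivatives $\frac{\partial^2 F}{\partial q_{i,\ell}\partial q_{i,\ell'}}$ vanish. Equivalently, for fixed $Q_{-i}$ the map $q_i\mapsto F(q_i,Q_{-i})$ is affine,
$$
F(q_i,Q_{-i}) = B(Q_{-i}) + \sum_{\ell=1}^{m_i}\frac{\partial F}{\partial q_{i,\ell}}(Q)\,q_{i,\ell},
$$
with coefficients independent of $q_i$ (functions of $Q_{-i}$ only). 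The second fact is that the dynamics is elementary stepwise: in round $t$ only the selected player $i=i(t)$ changes its strategy, so $\Delta q_{j,\ell}(t)=0$ for every $j\neq i$.

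Combining these, I would condition additionally on the selected player $i(t)=i$. Only the block $(q_{i,\ell})_\ell$ moves, and on that block $F$ is affine, so the increment equals its first-order term with \emph{no} remainder:
$$
\Delta F(Q(t+1)) = F(q_i(t)+\Delta q_i(t),Q_{-i}(t)) - F(q_i(t),Q_{-i}(t)) = \sum_{\ell=1}^{m_i}\frac{\partial F}{\partial q_{i,\ell}}(Q(t))\,\Delta q_{i,\ell}(t).
$$
Since $\Delta q_{j,\ell}(t)=0$ for $j\neq i$, extending the sum to all players does not change the right-hand side, which gives the pathwise identity $\Delta F(Q(t+1)) = \sum_{i,\ell}\frac{\partial F}{\partial q_{i,\ell}}(Q(t))\,\Delta q_{i,\ell}(t)$. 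Taking $\Ec{\cdot}{Q(t)}$ and pulling the $Q(t)$-measurable factors $\frac{\partial F}{\partial q_{i,\ell}}(Q(t))$ out of the expectation then yields \eqref{eq:demi}.

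The step that needs the most care is justifying that the full expansion of $\Delta F$ collapses to its first-order term. The clean way to see it: any higher-order term carries a factor $\Delta q_{i,\ell}\,\Delta q_{j,\ell'}$, which vanishes either because $i=j$ (the within-player second derivative is zero by multiaffineness) or because $i\neq j$ (one of the two factors is zero by the elementary stepwise property). I would also make sure the $\mathcal{O}(b)$ perturbations in the replicator-like update still respect the elementary stepwise structure, i.e. that only player $i(t)$'s components move in a round, since the argument uses precisely that at most one player's block changes at a time. This is exactly the feature that distinguishes our elementary stepwise dynamics from ordinary (non-stepwise) replicator dynamics, for which cross-player second-order terms would survive and no such martingale-friendly identity would hold.
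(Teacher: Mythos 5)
Your proof is correct and takes essentially the same route as the paper's: both arguments come down to the two facts that multiaffineness kills within-player second-order terms ($i=j$) and the elementary stepwise property kills cross-player terms ($i \neq j$) --- the paper phrases this via a Taylor remainder polynomial all of whose monomials vanish, while you phrase it via the affine restriction of $F$ to the selected player's block. The only (cosmetic) sharpening on your side is making explicit that the identity holds pathwise, i.e.\ the remainder is identically zero and not merely zero in conditional expectation, which is also implicit in the paper's argument.
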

\begin{proof}

Let us denote 
$R(Q,\Delta )= F(Q+\Delta)-F(Q)-\sum_{i=1}^n \sum_{\ell=1}^{m_i} \frac{\partial F}{\partial q_{i,\ell}}(Q) \Delta_{i,\ell}$
when $\Delta$ is a vector, so that by definition taking $\Delta=\Delta Q(t)$,  we have
$\Delta F(t)= F(Q(t+1))-F(Q(t))= \sum_{i=1}^n \sum_{\ell=1}^{m_i} \frac{\partial F}{\partial q_{i,\ell}} (Q) \Delta q_{i,\ell} + R(Q,\Delta Q(t)).$

We then have
$$
\Ec{\Delta F(t)}{Q(t)} = \sum_{i=1}^n \sum_{\ell=1}^{m_i} \frac{\partial F}{\partial q_{i,\ell}} (Q) \Ec{\Delta q_{i,\ell}}{Q(t)}  +  \Ec{R(Q,\Delta Q(t))}{Q(t)}.
$$

It only remains to prove that $\Ec{R(Q,\Delta Q(t))}{Q(t)}=0$ when $F$ is multiaffine.

A multiaffine function $F$ is particular polynomial function, of degree $1$ in each variable. By definition, $R(Q,\Delta Q)$ is hence also a polynomial function, of degree $1$ in each variable $\Delta Q_{i,\ell}$. By construction, it has no-constant term, and no monomial of the form $\beta_{i,\ell} \Delta Q_{i,\ell}$. Hence, all its monomials are of the form $\beta_{i,\ell,j,\ell'}\Delta Q_{i,\ell} (t) \Delta Q_{j,\ell'} (t)$, with $(i,\ell) \neq (j,\ell')$.

By definition of multiaffine function used in this paper,  there can not be terms $\Delta Q_{i,\ell} (t) \Delta Q_{j,\ell'} (t)$ with $i=j$ among these monomials.

Observe that $\Delta Q_{i,\ell}(t) \Delta Q_{j,\ell'}(t)=0$ for $i \neq j$:  indeed, at any time $t$, at most one player moves in the considered class of algorithms: in other words, we use the fact that considered algorithms are elementary stepwise.
\end{proof}

When considering a Lyapunov game with respect to replicator-like dynamics, using Equation \eqref{eq:deuxd} and the fact that $G_i(Q)=\lim_{b \to 0} \tilde{F}_i^b(Q)$ 
the right hand side of Equation \eqref{eq:demi} is 
\begin{equation} \label{eq:ideale}
b \sum_{i=1}^n \sum_{\ell=1}^{m_i} p_i \frac{\partial F}{\partial q_{i,\ell}} (Q) G_{i,\ell}(Q)  +\mathcal{O}(b^2),
\end{equation}
and hence expected to be negative by Equation \eqref{eq:lyapunov} when $G(Q)\neq 0$ and $b$ is sufficiently small. 

In other words, when $b$ is small, $(F(Q(t))_{t}$ will be a super-martingale until reaching a point where \eqref{eq:ideale} is close to $0$.

More precisely, for a replicator-like dynamics, Equation \eqref{eq:ideale} rewrites to

$$ 
- b \frac14 \sum_{i} p_i \sum_{\ell\neq \ell'} q_{i,\ell} q_{i,\ell'} (\uu_i(e_{\ell},Q_{-i})-\uu_i(e_{\ell'},Q_{-i}))^2 +\mathcal{O}(b^2).
$$

\OPTIMISTE{
As expected, on corners of $K$, this is expected to be close to $0$, and hence not (neccesarily) a super-martingale. 

For the perturbed replicator-like dynamics, taking the perturbation $\mathcal{O}(b)$ in page \pageref{per:urgence} to be $0$, Equation \eqref{eq:ideale} rewrites to
$$ 
- b \alpha \frac14 \sum_{i} p_i \sum_{\ell\neq \ell'} q_{i,\ell} q_{i,\ell'} (\uu_i(e_{\ell},Q_{-i})-\uu_i(e_{\ell'},Q_{-i}))^2
+ b^2 (1-\alpha) \sum_{i=1}^n \sum_{\ell=1}^{m_i} \frac{\partial F}{\partial q_{i,\ell}} (Q(t)) (\frac{1}{m_i} -q_{i,\ell}).
$$
which can be written
$$
- b \alpha \frac14 \sum_{i} p_i \sum_{\ell\neq \ell'} q_{i,\ell} q_{i,\ell'} (\uu_i(e_{\ell},Q_{-i})-\uu_i(e_{\ell'},Q_{-i}))^2
+ \mathcal{O}(b^2).$$
}










When talking about stochastic perturbed dynamics, using this super-martingale argument, one gets the following stability result
: we write $L(\mu)$ for the subset of states $Q$ on which $F(Q) \le \mu$.

\begin{proposition}\label{th:timetwo}
Let  $\lambda>1$. Let $Q(t_0)$ be some state.  Consider $b$ enough small so that \eqref{eq:demi} is non-positive outside of $L(F(Q(0)))$. Then $Q(t)$ will be such that $Q(t) \in L(\lambda F(Q(t_0)))$  forever  after time $t \ge t_0$ with a probability greater than $1-\frac1\lambda$.
\end{proposition}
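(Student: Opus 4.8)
The plan is to read the statement as a maximal inequality for a non-negative supermartingale. Write $c=F(Q(t_0))$ and note that, $F$ being a Lyapunov function, $F\ge 0$; moreover the conclusion $\{\,\forall t\ge t_0,\ Q(t)\in L(\lambda c)\,\}$ is exactly $\{\sup_{t\ge t_0}F(Q(t))\le \lambda c\}$. So it suffices to bound $P\!\left(\sup_{t\ge t_0}F(Q(t))\ge \lambda c\right)\le \tfrac1\lambda$, and the whole proof reduces to producing the right supermartingale and applying a Doob-type inequality.

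First I would record the supermartingale property. By Lemma \ref{prop:keylemma}, since $F$ is multiaffine the conditional increment $\Ec{\Delta F(Q(t+1))}{Q(t)}$ is given exactly by \eqref{eq:demi}, hence by \eqref{eq:ideale} equals $b\sum_{i,\ell}p_i\frac{\partial F}{\partial q_{i,\ell}}(Q)G_{i,\ell}(Q)+\m(b^2)$. The hypothesis on $b$ makes this quantity non-positive for every $Q\notin L(c)$, so $\Ec{F(Q(t+1))}{Q(t)}\le F(Q(t))$ as long as $F(Q(t))>c$; that is, $(F(Q(t)))_{t\ge t_0}$ is a non-negative supermartingale while it stays above level $c$, with $F(Q(t_0))=c$.

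I would then introduce the stopping time $\tau=\inf\{t\ge t_0:\ F(Q(t))\ge \lambda c\}$ and argue on the stopped process $F(Q(t\wedge\tau))$. On the annulus $\{c<F<\lambda c\}$ this is a genuine non-negative supermartingale, so optional stopping gives $\lambda c\,P(\tau<\infty)\le \E{F(Q(t_0\wedge\tau))}=c$, i.e. $P(\tau<\infty)\le c/(\lambda c)=\tfrac1\lambda$; taking complements yields the announced $1-\tfrac1\lambda$.

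The main obstacle is that the supermartingale property is only granted \emph{outside} $L(c)$: inside $L(c)$ the $\m(b)$ drift \eqref{eq:ideale} may be positive, so $F(Q(t))$ is not globally a supermartingale and the maximal inequality cannot be applied verbatim. The reconciling observations are that $L(c)\subseteq L(\lambda c)$ lies strictly below the escape level $\lambda c$, and that every increment is $\m(b)$. Hence, for $b$ small, the process cannot leap the annulus $\{c<F<\lambda c\}$ in one step: any route to level $\lambda c$ must enter the annulus from $L(c)$ at a height at most $c+\m(b)$ and then cross it as a bona fide supermartingale. Controlling these excursions through the strong Markov property — and using that the supermartingale converges almost surely, so the positive drift in $L(c)$ cannot accumulate into ever-larger excursions — makes the stopped-process bound legitimate and drives the hitting probability to $\tfrac1\lambda$ as $b\to0$. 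Turning this excursion bookkeeping into a clean estimate, rather than silently treating $F$ as a global supermartingale, is the delicate point of the argument.
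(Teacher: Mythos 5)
Your first three paragraphs are, in substance, the paper's own proof: the paper disposes of this proposition by a single application of Proposition \ref{prop:doob} (Doob's maximal inequality for non-negative supermartingales) with $\lambda'=\lambda\E{Z_0}$, which is exactly the bound you derive by optional stopping of the stopped process $F(Q(t\wedge\tau))$. So on the main mechanism you and the paper coincide, and up to that point your write-up is if anything more careful than the paper's (whose proof, as written, applies Proposition \ref{prop:doob} to $Z_n=\max_{t\le n}F(Q(t))$, a non-decreasing sequence, hence a submartingale rather than a supermartingale).

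The difficulty you isolate in your last paragraph is genuine, and the paper ignores it entirely; but your proposed repair is not a proof, and no excursion bookkeeping can make it one, because the hypothesis as stated does not imply the conclusion. The obstruction is not ``ever-larger excursions'' but \emph{infinitely many restarted} ones. Consider a caricature consistent with the hypothesis, with $c=F(Q(t_0))$: whenever $F(Q(t))\le c$ (i.e. inside $L(c)$, where nothing is assumed) the process moves up by $b$; whenever $F(Q(t))>c$ it performs a fair $\pm b$ random walk, which is a martingale, so \eqref{eq:demi} is non-positive outside $L(c)$ and all increments are $\m(b)$. Starting at $F=c$, each excursion above $c$ reaches level $\lambda c$ before falling back to $c$ with probability of order $\frac{b}{(\lambda-1)c}$ (gambler's ruin); this is small, but after every return the drift inside $L(c)$ launches a fresh, conditionally independent attempt, and over infinitely many attempts the process exceeds $\lambda c$ almost surely. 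So the ``forever'' conclusion fails with probability one, not with probability at most $\frac1\lambda$. This also shows why your two patching ingredients cannot suffice: Doob and optional stopping give only an \emph{upper} bound per excursion, with no control on the sum over excursions, and there is no globally defined non-negative supermartingale available whose a.s. convergence you could invoke (that is precisely what is missing).

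Consequently, any correct argument must use more than the stated hypothesis. Either one strengthens the hypothesis so that \eqref{eq:demi} is non-positive everywhere on $L(\lambda c)$ (not just outside $L(c)$) — then $F(Q(t\wedge\tau))$ is a true non-negative supermartingale and your optional-stopping computation, or Proposition \ref{prop:doob}, applies verbatim — or one weakens the conclusion to ``$Q(t)$ stays in $L(\lambda c)$ until its first entry into $L(c)$,'' by stopping the process at that entry time as well (this is how Proposition \ref{prop:cinq} is used in the proof of Theorem \ref{th:timeone}, where the stopping set is built into the statement). Your instinct that this was the delicate point is exactly right; the honest assessment is that the delicate point is fatal to the statement as phrased, for your proof and for the paper's alike.
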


\begin{proof}
Consider sequence $Z_n=\max_{t \le n} F(Q(t))$ and $\mathcal{F}_i$ the sigma-algebra generated by $(Q(j))_{j \le i}$, and apply Proposition \ref{prop:doob} for $\lambda'=\lambda \E{Z_0}$:  $$P[\forall n, F(Q(n)) \le \lambda F(Q(0)]=P[\sup_{n} Z_n \ge \lambda'] \le \frac{\E{Z_0}}{\lambda'} = \frac1\lambda.$$







\end{proof}

If dynamic is perturbed, then the underlying Markov chain is ergodic. It follows that any neighborhood is visited with a positive probability: a dynamic will be said \emph{perturbed} if for all $Q \in K$, for any neighborhood $V$ with $Q$ in its closure, the probability that $Q(t+1) \in V$ when $Q(t)=Q$ is positive. 

Then if in some neighborhood of such a point we can apply previous proposition, one would get that almost surely, after some time, $Q(t)$ will be close to some Nash equilibria forever with high probability. The default of such an approach is clearly on the fact that it does not provide bounds on the time required to reach such a neighborhood.

Notice that for Lyapunov game with a multiaffine Lyapunov function $F$, with respect to Dynamic~\eqref{eq:replicatorres} (this include ordinal, and hence potential games from above discussion), the points $Q^*$ realizing the minimum value $F^*$ of $F$ over compact $K$ must correspond to Nash equilibria.







Fortunately, this is possible to get bounds on the expected time of convergence
: we write $L(\mu)$ for the subset of states $Q$ on which $F(Q) \le \mu$.

\begin{definition}[$\epsilon$-Nash equilibrium]
Let $\epsilon \ge 0$. 
A state $Q$ is some $\epsilon$-Nash equilibrium iff for all $1 \le i \le n, 1 \le \ell \le m_i$, we have 
$\uu_i(e_{\ell},Q_{-i}) \ge (1-\epsilon) \uu_i(q_i,Q_{-i}).$
\end{definition}

\OPTIONAL{
If one prefers, in an $\epsilon$-Nash equilibrium, no player can improve its situation by more than $\epsilon$ times its current cost by changing unilaterally its strategy. 

In a 
non $\epsilon$-Nash equilibrium, we have some $i$ and $\ell$, with 
$\uu_i(e_{\ell},Q_{-i}) < (1-\epsilon) \uu_i(q_i,Q_{-i})$. This means, 
$\uu_i(q_i-e_{\ell},Q_{-i}) > \epsilon \uu_i(q_i,Q_{-i})$.
}

For the perturbed replicator-like dynamics, taking the perturbation $\mathcal{O}(b)$ to be $0$ in the definition of this dynamics,  we have $$\Ec{\Delta q_{i,\ell}}{Q(t)} = - \alpha b  p_i q_{i,\ell} (u_i(e_{\ell},Q_{-i})-u_i(q_i,Q_{-i})) + b^2
(1-\alpha) (\frac{1}{m_i} -q_{i,\ell}).
$$

Assume without loss of generality that all costs are greater than $1$. 
Let $\zeta=p_i u_i(q_i-e_{\ell},Q_{-i})$ and $\beta=1-\alpha$. Previous equation is of the form
$ b(\alpha  q_{i,\ell} \zeta + b
\beta (\frac{1}{m_i} -q_{i,\ell})),$
hence some strictly increasing function of $q_{i,\ell}$ as soon as $b < \frac\alpha\beta p_i \epsilon \uu_i(q_i,Q_{-i})$ and $\zeta> p_i\epsilon \uu_i(q_i,Q_{-i})$. In that case, its minimal value, obtained for $q_{i,\ell}=0$ is $\delta=\frac{b^2\beta}{m_i}$. 

So, as soon as $\zeta > p_i \epsilon \uu_i(q_i,Q_{-i})$, that is to say $u_i(q_i-e_{\ell},Q_{-i}) > \epsilon \uu_i(q_i,Q_{-i})$, we will have $\Ec{\Delta q_{i,\ell}}{Q(t)} \ge \delta$, that implies $\Ec{q_{i,\ell}(t+1)}{Q(t)} \ge \delta$.

This implies that the opposite of $\Ec{\Delta F(Q(t+1))}{Q(t)}$ will be greater than 
$$V= b \alpha \frac14 p_i \delta \sum_{\ell\neq \ell'} q_{i,\ell'} (\uu_i(e_{\ell},Q_{-i})-\uu_i(e_{\ell'},Q_{-i}))^2
+ \mathcal{O}(b^2).
$$

Taking $b < (1-\mu) \frac\alpha\beta p_i \epsilon \uu_i(q_i,Q_{-i})$ for any $\mu>0$ guarantees that the factor in $q_{i,\ell}$ in previous discussed expression is greater than $\mu \zeta \alpha$, and hence that its iterations growth exponentially fast near $0$.  Reasoning by sequences of $k$ steps, i.e. about 
the opposite of $\Ec{\Delta F(Q(t+k))}{Q(t)}$, 
will greater than a term of order
$$V= b \alpha \frac14 p_i  \sum_{\ell\neq \ell'} q_{i,\ell'} (\uu_i(e_{\ell},Q_{-i})-\uu_i(e_{\ell'},Q_{-i}))^2
 $$
in a non-$\epsilon$-Nash equilibrium.

\begin{theorem} \label{th:timeone}
Consider a Lyapunov game with a multiaffine Lyapunov function $F$, with respect to \eqref{eq:replicatorres}. This includes ordinal, and hence potential games from above discussion. Taking $b=\mathcal{O}(\epsilon)$, 
whatever the initial state of the stochastic algorithm is, it will almost surely reach 
some $\epsilon$-Nash equilibrium. 
Furthermore, it will do it in a random time whose expectation $T(\epsilon)$ satisfies 
$$T(\epsilon) \le \mathcal{O}(\frac{F(Q(0))}{\epsilon}).$$
\end{theorem}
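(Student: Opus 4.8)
The plan is to use the multiaffine Lyapunov function $F$ as a stochastic potential and to reduce the statement to a classical additive-drift (hitting-time) bound for non-negative super-martingales. Write $A_\epsilon$ for the set of $\epsilon$-Nash equilibria and let $\tau=\inf\{t:Q(t)\in A_\epsilon\}$ be the first round the perturbed replicator-like algorithm hits it. Since $F\ge 0$ by the definition of a Lyapunov game, everything reduces to establishing a uniform negative drift of $F$ off the target set: a constant $c>0$ (of order $\epsilon$ once $b=\mathcal{O}(\epsilon)$) such that $\Ec{\Delta F(Q(t+1))}{Q(t)}\le -c$ whenever $Q(t)\notin A_\epsilon$. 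Given such a $c$, the additive-drift lemma will yield $\E{\tau}\le F(Q(0))/c=\mathcal{O}(F(Q(0))/\epsilon)$, and finiteness of the expectation forces $\tau<\infty$ almost surely, which is exactly the claimed almost-sure reachability.

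For the drift estimate I would start from Lemma~\ref{prop:keylemma}. Its whole point is that, $F$ being multiaffine and at most one player moving per round, the first-order expansion of $\Delta F$ is \emph{exact}: \eqref{eq:demi} holds with no Taylor remainder to control. Substituting the one-step law of the perturbed replicator-like dynamics into \eqref{eq:demi} produces precisely the expression \eqref{eq:ideale} evaluated for this dynamic, i.e. a leading term $-\tfrac{b\alpha}{4}\sum_i p_i\sum_{\ell\neq\ell'}q_{i,\ell}q_{i,\ell'}(\uu_i(e_\ell,Q_{-i})-\uu_i(e_{\ell'},Q_{-i}))^2$ plus an $\mathcal{O}(b^2)$ correction. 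On the complement of $A_\epsilon$ there are indices $i,\ell$ with $\uu_i(q_i-e_\ell,Q_{-i})>\epsilon\,\uu_i(q_i,Q_{-i})\ge\epsilon$ (costs normalized to be $\ge 1$), so, choosing $b=\mathcal{O}(\epsilon)$ small enough that the $\mathcal{O}(b^2)$ term is dominated by the leading one, the right-hand side is bounded above by $-c$ with $c$ of the order announced in the computation that precedes the theorem.

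The main obstacle is exactly the near-corner situation, and this is where the \emph{perturbation} is indispensable. If the profitable pure strategy $\ell$ currently carries probability $q_{i,\ell}\approx 0$, the replicator weight $q_{i,\ell}q_{i,\ell'}$ in \eqref{eq:ideale} almost vanishes and $F$ need not decrease in a single step. To repair this I would argue not step by step but over blocks of $k$ rounds: the uniform-exploration term adds $b^2(1-\alpha)(\tfrac{1}{m_i}-q_{i,\ell})$ to $\Ec{\Delta q_{i,\ell}}{Q(t)}$, which guarantees $\Ec{q_{i,\ell}(t+1)}{Q(t)}\ge b^2(1-\alpha)/m_i>0$ and, as spelled out just before the theorem, an exponentially fast growth of $q_{i,\ell}$ while it is small. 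Hence over a block of length $k$ the coordinate $q_{i,\ell}$ reaches a non-negligible value and the accumulated decrease $-\Ec{\Delta F(Q(t+k))}{Q(t)}$ is bounded below by the quantity $V$ derived there, restoring a uniform per-block drop $c=\Omega(\epsilon)$ outside $A_\epsilon$. Choosing $k$ and verifying that this lower bound is uniform over the whole non-$\epsilon$-Nash region is the delicate, calculation-heavy part of the argument.

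With the uniform drop in hand I would finish by the standard optional-stopping computation. Set $Y_t=F(Q(t\wedge\tau))+c\,(t\wedge\tau)$. On $\{\tau>t\}$ we have $Q(t)\notin A_\epsilon$, hence $\Ec{Y_{t+1}-Y_t}{\mathcal{F}_t}=\Ec{\Delta F(Q(t+1))}{Q(t)}+c\le 0$, while $Y_{t+1}=Y_t$ for $t\ge\tau$; as $F\ge 0$, $(Y_t)$ is a non-negative super-martingale. Therefore $c\,\E{t\wedge\tau}\le\E{Y_t}\le\E{Y_0}=F(Q(0))$, and letting $t\to\infty$ by monotone convergence gives $\E{\tau}\le F(Q(0))/c=\mathcal{O}(F(Q(0))/\epsilon)$; in particular $\tau<\infty$ almost surely, so the algorithm almost surely reaches an $\epsilon$-Nash equilibrium. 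When the drift was established only over blocks of $k$ rounds, one runs the identical argument on the sub-sampled chain $Q(0),Q(k),Q(2k),\dots$ and multiplies the resulting bound by $k$. This is essentially the super-martingale mechanism already exploited in Proposition~\ref{th:timetwo}, here combined with an additive drift rather than with Doob's maximal inequality.
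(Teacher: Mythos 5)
Your proposal is correct and follows essentially the same route as the paper: the exact first-order expansion of Lemma~\ref{prop:keylemma}, the drift computation for the perturbed replicator-like dynamics (including the block-of-$k$-rounds argument to restore a uniform drop near corners), and an additive-drift hitting-time bound giving $\E{\tau}\le F(Q(0))/c$. The only cosmetic difference is that you re-derive the drift-to-hitting-time step by optional stopping on $F(Q(t\wedge\tau))+c\,(t\wedge\tau)$, whereas the paper invokes Proposition~\ref{prop:cinq}, which packages exactly that argument.
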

\begin{proof}

 Consider $V^*=\min_{i,q_i} \sum_{\ell\neq \ell'} q_{i,\ell'} (\uu_i(e_{\ell},Q_{-i})-\uu_i(e_{\ell'},Q_{-i}))^2$.  Let $I(\epsilon)$ denote the states where the righthand side of Equation \eqref{eq:demi} is greater than $- b\alpha \frac14 \min_i p_iV^* 
\epsilon$.

If the initial state is already $\epsilon$-stable 
 then there is nothing to prove.

Otherwise, this follows from the analysis before Theorem \ref{th:timeone}, and from proposition \ref{prop:cinq}, with $Z_i= F(Q(i))$, $\mathcal{F}_i$ the sigma-algebra generated by $(Q(j))_{j \le i}$, $C=\mu$, $K=I(\epsilon)$: indeed, whenever
$Q(t) \not\in I(\epsilon) \cup L(\mu)$, this implies $\tau > t$,  and we have $\Ec{\Delta F(t)}{Q(t)} = \Ec{Z_{t+1}-Z_{t}}{\mathcal{F}_t} \le -\epsilon \mathcal{O}(b)$. In all other cases, 
$\Ec{\tilde{Z}_{n+1} }{\mathcal{F}_n}= \tilde{Z}_n$ and hence all the hypotheses of Proposition \ref{prop:cinq} are satisfied.
\end{proof}


\CONJECTURAL{ We believe these bounds are tight for generic ordinal games.  The point is that in arbitrary ordinal games, there is no necessarily relation between the gain in utility and the gain in potential: only sign of variation must be preserved. 

Of course better bounds can be hoped for particular games, and in particular for congestion games. For generic congestion games, there is a strong relation between the potential and utilities of players. In congestion games, using notations from page \pageref{def:congestion}, the potential is given by $F(Q) = \E{ \sum_{r=1}^m \sum_{t=1}^{\lambda_r(t)} C_r(t)}$. One has in particular $F(Q) \le \E{\sum_{i=1}^n \uu_i(c_i,Q)},$ since $c_i (Q) = \sum_{r \in q_i} C_r (\lambda_r (Q)).$ 
%

In particular, following \cite{ChienS07},  a congestion game is said to satisfy the \emph{$\alpha$-bounded jump condition} if its cost functions satisfy $C_r(t+1) \le \alpha C_r(t)$ for all $t \ge 1$. This ensures the following property 
for $\delta=\frac1{\alpha n}$ (see \cite{ChienS07}): 
%
%
  whenever $Q$ is not an $\epsilon$-Nash equilibrium, then for at least a player $i$, the relative cost of adopting some pure strategy $\ell$ would induce a gain at least $\delta$ times the resulting gain in potential. 



We believe perturbed replicator-like dynamics to converge very fast (hence in polynomially many steps)
on such games.}

\newpage
\appendix

\section{Results About Semi-Martingales}
\label{proof:th:timeone}

Let $\{Z_i, i \ge 0\}$ be a sequence of real non-negative random variables, such that $Z_i$ is measurable in the increasing family of sigma-algebra $\mathcal{F}_i$. 

\begin{proposition}[{proof similar to \cite[Theorem 2.1.1, page 17]{LivreFayolle}}] \label{prop:cinq}
Assume that $Z_0$ is constant. Denote by $\tau$ the $\mathcal{F}_{n}$-stopping time representing the epoch of the first entry into $[0,C]$ or in some measurable subset $K$, for $C>0$, i.e. $\tau(\omega)=\inf \{n \ge 1| Z_n(\omega) \le C \vee  Z_n(\omega) \in K\}$. Introduce the stopped sequence
$$\tilde{Z}_n = Z_{n \wedge \tau},$$ where
$$n \wedge \tau = \left\{ \begin{array}{ll}
n, &\mbox{ if } n \le \tau \\
\tau,  &\mbox{ if } n > \tau \\
\end{array}
\right.$$
We use the classical notation for the indicator function $1_\mathcal{A}:$
$$1_\mathcal{A} = \left\{ \begin{array}{ll}
1, &\mbox{ if }  \mathcal{A} \mbox{ is true}\\
0,  & \mbox{ otherwise} \\
\end{array} \right.$$

Assume $Z_0 > C$, and for some $\epsilon>0$ and all $n \ge 0$, 
$$\Ec{\tilde{Z}_{n+1} }{\mathcal{F}_n}\le \tilde{Z}_n - \epsilon 1_{\tau >n}, \mbox{ almost surely.}$$
Then $\tau$ is almost surely finite and
$$\E{\tau} < \frac{Z_0}{\epsilon} < \infty.$$
\end{proposition}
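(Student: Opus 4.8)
The plan is to recognize the hypothesis as the drift condition for a non-negative supermartingale and to extract the time bound by a telescoping argument. First I would record that the stopped sequence $\tilde{Z}_n = Z_{n \wedge \tau}$ is integrable: since $Z_0$ is constant and $1_{\tau > n} \ge 0$, taking expectations in the assumed inequality gives, by induction, $\E{\tilde{Z}_{n+1}} \le \E{\tilde{Z}_n} \le \cdots \le Z_0 < \infty$, so all the conditional expectations involved are well defined. Here it is essential that $\tau$ is an $\mathcal{F}_n$-stopping time, so that $\{\tau > n\} = \{\tau \le n\}^c \in \mathcal{F}_n$ and the correction term $\epsilon\, 1_{\tau > n}$ is legitimately $\mathcal{F}_n$-measurable; I would also note $\tilde{Z}_0 = Z_{0 \wedge \tau} = Z_0$.

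Next I would take unconditional expectations of the hypothesis, using the tower property, to reduce to the scalar recursion $\E{\tilde{Z}_{n+1}} \le \E{\tilde{Z}_n} - \epsilon\, P(\tau > n)$. Summing over $n = 0, 1, \dots, N-1$ telescopes to $\E{\tilde{Z}_N} \le Z_0 - \epsilon \sum_{n=0}^{N-1} P(\tau > n)$, and discarding the non-negative left-hand side yields $\epsilon \sum_{n=0}^{N-1} P(\tau > n) \le Z_0$ for every $N$. Letting $N \to \infty$ and invoking the elementary identity $\E{\tau} = \sum_{n \ge 0} P(\tau > n)$, valid for the $\{1,2,\dots\} \cup \{\infty\}$-valued variable $\tau$ (note $\tau \ge 1$ since the infimum is taken over $n \ge 1$), gives $\epsilon\, \E{\tau} \le Z_0$, hence $\E{\tau} \le Z_0/\epsilon < \infty$; in particular $\tau$ is finite almost surely. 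An equivalent route, which I would mention as a cross-check, is to verify directly that $Y_n = \tilde{Z}_n + \epsilon\,(n \wedge \tau)$ is a non-negative supermartingale, using $(n+1)\wedge\tau - n\wedge\tau = 1_{\tau > n}$, and then apply $\E{Y_N} \le \E{Y_0} = Z_0$.

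The only genuinely delicate point is obtaining the \emph{strict} inequality $\E{\tau} < Z_0/\epsilon$ rather than merely $\le$. Passing to the limit with Fatou's lemma sharpens the telescoped bound to $\epsilon\,\E{\tau} \le Z_0 - \E{Z_\tau}$, so strictness holds as soon as the stopped value $Z_\tau$ is not almost surely zero; combined with the fact that $Z_0 > C$ forces $\tau \ge 1$ and hence at least one guaranteed decrement of size $\epsilon$ on the first step, this accounts for the strict inequality, exactly as in \cite[Theorem 2.1.1]{LivreFayolle}. I expect this limit interchange, together with the bookkeeping establishing $\tau < \infty$ almost surely, to be the main (if modest) obstacle, whereas the supermartingale structure and the telescoping estimate follow immediately from the hypothesis.
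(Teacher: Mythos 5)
Your core argument is correct and is exactly the proof the paper intends: the paper itself gives no proof of this proposition, deferring entirely to the citation of Fayolle--Malyshev--Menshikov, and the argument there is precisely your telescoping of the drift condition --- take expectations via the tower property, sum over $n=0,\dots,N-1$, discard the non-negative term $\E{\tilde{Z}_N}$, and use $\E{\tau}=\sum_{n\ge 0}P(\tau>n)$ to get $\epsilon\,\E{\tau}\le Z_0$ and hence almost sure finiteness of $\tau$. Your supermartingale cross-check with $Y_n=\tilde{Z}_n+\epsilon\,(n\wedge\tau)$ is also sound.

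The one flaw is your handling of the strict inequality. The Fatou refinement $\epsilon\,\E{\tau}\le Z_0-\E{Z_\tau}$ is correct, but nothing in the hypotheses rules out $Z_\tau=0$ almost surely, and the ``guaranteed decrement of size $\epsilon$ on the first step'' is already counted once in the telescoping sum, so it cannot be invoked a second time to manufacture strictness. In fact the strict bound is false as stated: take $C=1/2$, $\epsilon=1$, $Z_0=1$ and deterministically $Z_n=0$ for all $n\ge 1$; every hypothesis holds (the drift condition holds with equality at $n=0$ and trivially afterwards, since $1_{\tau>n}=0$ for $n\ge 1$), yet $\tau=1$ and $\E{\tau}=1=Z_0/\epsilon$. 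So what your argument proves --- and all that can be proved --- is $\E{\tau}\le Z_0/\epsilon$; the strict ``$<$'' is an overstatement in the proposition itself, not a gap you could have closed. This non-strict form is all the paper ever uses (Theorem \ref{th:timeone} only needs $T(\epsilon)=\mathcal{O}(F(Q(0))/\epsilon)$), so your proof, with the strictness claim deleted or weakened to ``$\le$'', is complete.
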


\begin{proposition}[{ \cite[Theorem 3.2, Chapter 7]{doob1953}}] \label{prop:doob}
 Assume that for all $n$, $\Ec{Z_{n+1}-Z_n}{\mathcal{F}_n} \le 0$. Then for all $\lambda'>0$,
$$P[\sup_{n} Z_n \ge \lambda'] \le \frac{\E{Z_0}}{\lambda'}.$$
\end{proposition}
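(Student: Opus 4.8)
The statement is precisely the maximal inequality for nonnegative supermartingales, so the plan is the classical optional-stopping argument, carried out at a finite horizon and then passed to the limit. The hypothesis $\Ec{Z_{n+1}-Z_n}{\mathcal{F}_n}\le 0$ says exactly that $(Z_n)$ is a supermartingale, and the non-negativity assumed at the start of this section is what lets a single level $\lambda'$ control the whole trajectory.

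First I would introduce, for the fixed level $\lambda'$, the first-passage time $\tau=\inf\{n\ge 0: Z_n\ge\lambda'\}$, which is an $\mathcal{F}_n$-stopping time since $\{\tau\le n\}$ is determined by $Z_0,\dots,Z_n$. The crucial point is that stopping preserves the supermartingale property: writing $W_n=Z_{n\wedge\tau}$, one has $W_{n+1}-W_n=(Z_{n+1}-Z_n)\,1_{\tau>n}$, and because $1_{\tau>n}$ is non-negative and $\mathcal{F}_n$-measurable, the hypothesis gives $\Ec{W_{n+1}-W_n}{\mathcal{F}_n}\le 0$. Taking expectations and iterating (using $W_0=Z_0$, as $0\wedge\tau=0$) yields $\E{W_n}\le\E{Z_0}$ for every $n$.

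Next I would extract the maximal event at finite horizon. On $\{\tau\le n\}$ the sequence has reached level $\lambda'$ by time $n$, so $W_n=Z_\tau\ge\lambda'$; on the complement non-negativity gives $W_n\ge 0$. Hence $\E{Z_0}\ge\E{W_n}\ge\lambda'\,P[\tau\le n]$, that is $P[\max_{0\le k\le n}Z_k\ge\lambda']\le\E{Z_0}/\lambda'$. Letting $n\to\infty$, the events $\{\tau\le n\}=\{\max_{k\le n}Z_k\ge\lambda'\}$ increase to $\{\tau<\infty\}$, so continuity of probability from below gives $P[\tau<\infty]\le\E{Z_0}/\lambda'$.

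The main obstacle is the subtle gap between $\{\tau<\infty\}$ (some $Z_n$ genuinely attains the level $\lambda'$) and the event $\{\sup_n Z_n\ge\lambda'\}$ as worded, since the supremum could equal $\lambda'$ while being approached only in the limit, so that $\tau=+\infty$. I would close this by running the whole argument at each strictly lower level $\lambda'-1/k$: this gives $P[\sup_n Z_n>\lambda'-1/k]=P[\tau_{\lambda'-1/k}<\infty]\le\E{Z_0}/(\lambda'-1/k)$, and since $\{\sup_n Z_n\ge\lambda'\}=\bigcap_{k\ge 1}\{\sup_n Z_n>\lambda'-1/k\}$, letting $k\to\infty$ recovers exactly the stated bound $\E{Z_0}/\lambda'$.
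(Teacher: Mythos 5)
Your proof is correct. Note first that the paper gives no proof of this proposition at all --- it is quoted as a known result from Doob's book \cite{doob1953} --- so the only comparison available is with the classical argument, which is exactly what you reproduce: stop at the first-passage time $\tau$ of level $\lambda'$, verify via the identity $W_{n+1}-W_n=(Z_{n+1}-Z_n)1_{\tau>n}$ (with $1_{\tau>n}$ non-negative and $\mathcal{F}_n$-measurable) that stopping preserves the supermartingale property, bound $\lambda'\,P[\tau\le n]\le \E{W_n}\le \E{Z_0}$ using non-negativity off $\{\tau\le n\}$, and pass to the limit by monotone continuity of $P$. Your final paragraph handles a genuine subtlety that casual write-ups skip: $\{\tau<\infty\}$ is the event that the level is actually attained, which can be strictly smaller than $\{\sup_n Z_n\ge\lambda'\}$, and your approximation through the strictly lower levels $\lambda'-1/k$, followed by continuity from above along the decreasing intersection, closes this gap correctly and recovers the stated constant $\E{Z_0}/\lambda'$. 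The only blemish is cosmetic: your claimed identity $P[\sup_n Z_n>\lambda'-1/k]=P[\tau_{\lambda'-1/k}<\infty]$ is in general only an inclusion-based inequality $\le$, since the first-passage time is defined with $\ge$ and the supremum event with $>$; but the inequality is all your argument uses, so nothing breaks.
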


\end{document}